\newcommand{\cat}[1]{\mathcal{#1}}
\newcommand{\ncat}[1]{\mathsf{#1}}
\newcommand{\EMet}{\ncat{EMet}}
\newcommand{\CMet}{\ncat{CMet}}
\newcommand{\PMet}{\ncat{PMet}}
\newcommand{\Met}{\ncat{Met}}
\newcommand{\Set}{\ncat{Set}}
\newcommand{\lt}[1]{\mathcal{#1}} 
\newcommand{\et}[1]{\mathbb{#1}} 
\newcommand{\Mod}{Mod}
\newcommand{\Mnd}{Mnd}
\newcommand{\Free}{Free}
\newcommand{\Lan}{Lan}
\newcommand{\ok}{~ok}
\newcommand{\op}{op}
\newcommand{\Law}{Law}
\newcommand{\colim}{colim}
\newcommand{\Rp}{\mathbf{R}_{\geq 0}}
\newcommand{\lT}{\lt{T}}
\newcommand{\eT}{\et{T}}
\newcommand{\e}{\varepsilon}
\newtheorem{theorem}{Theorem}[section]
\newtheorem{corollary}[theorem]{Corollary}
\newtheorem{lemma}[theorem]{Lemma}
\newtheorem{definition}[theorem]{Definition}
\newtheorem{proposition}[theorem]{Proposition}
\newtheorem{example}[theorem]{Example}
\title{Metric Equational Theories}
\author{Radu Mardare\footnote{Mardare's research was supported by the EPSRC grant EP/Y000455/1, \textit{A correct-by-construction approach to approximate computation} and by ARIA TA1.1 project \textit{Predicate Logic as a Foundation for Verified ML} }
\institute{Heriot-Watt University\\ Edinburgh, Scotland}
\and
Neil Ghani  
\institute{University of Strathclyde\\ Glasgow, Scotland}
\and
Eigil Rischel  
\institute{University of Strathclyde\\ Glasgow, Scotland}
}
\begin{document}
\maketitle

\begin{abstract}

This paper proposes appropriate sound and complete proof systems for algebraic structure over metric spaces by combining the development of Quantitative Equational Theories (QET) with the Enriched Lawvere Theories. We extend QETs to Metric Equational Theories (METs) where operations no longer have finite sets as arities (as in QETs and in the general theory of universal algebras), but arities are now drawn from countable metric spaces. This extension is inspired by the theory of Enriched Lawvere Theories which suggests that the arities of operations should be the $\lambda$-presentable objects of the underlying $\lambda$-accessible category. In this setting, the validity of terms in METs can no longer be guaranteed independently of the validity of equations, as it is the case with QET. We solve this problem, and adapt the sound and complete proof system for QETs to these more general METs, taking advantage of the specific structure of metric spaces.
\end{abstract}

\section{Introduction}


Recently, Mardare et al \cite{raduQAR2016,raduQAR2017} introduced Quantitative Algebra  (QA) and Quantitative Equational Theories (QETs) to extend one of 
the central pillars of modern mathematics, namely universal algebra (UA), from the exact world to the approximate world. Central to their work was the introduction of approximate equations $s =_\epsilon t$ (for a positive real $\epsilon$), formalising the intuition that $\epsilon$ measures the behavioural \textit{similarity} between terms $s$ and $t$. 
The generality of this new idea ---replacing Boolean reasoning within equational logic with a more refined approximate form of reasoning--- gives us a new paradigm which supports a rigorous logical framework for a proper approximation theory, where bounds can be handled, convergences proven, and limits approximated. Quantitative Equational Theories~\cite{raduQAR2016} have been used to provide simple axiomatic presentations of well-behaved metrics for several fundamental computational structures, e.g.\ the Kantorovich-Wasserstein metrics (resp.\  Hausdorff metrics) arise from convex structures (resp.\  semi-lattices). 
Further, Quantitative Algebra and Quantitative Equational Theories have 
been shown to have good meta-theoretic properties. For example, 
variety and quasi-variety results have been proven for Quantitative Algebra \cite{raduQAR2017}, revealing new insights of approximated reasoning. Similarly, compositionality principles have been studied for Quantitative Algebra, providing a formal tool to control the error propagation when computational systems interact \cite{Bacci18, Bacci21}. Further work was done in the direction of developing products and coproducts, and tensors of QETs \cite{bacci2022sum}. \textit{Conway} and \textit{iteration theories} have also been generalised to the quantitative case, to cover not only exact fixed-points $x = f (x)$, but also approximate fixed-points $x =_\epsilon f (x)$ \cite{Mardare21}. However, there remains a fundamental unanswered question:
\begin{quote}
	{\em What is the natural format/foundation for presenting algebraic structure in a metric setting? }
\end{quote}
In the setting of traditional universal algebra, there are strong results which (when adapted to the metric setting) form desiderata for any answer to the above question. These include:
(i) every equational theory supports a sound and complete proof system inducing a congruence between terms; 
(ii) every such theory induces a finitary monad on the category of $\Set$ which generates free algebras from generator sets;
(iii) the models of an equational theory are the Eilenberg-Moore algebras of the associated finitary monad; and
(iv) every finitary monad on $\Set$ arises via this construction.
 
Unfortunately, QETs fail these desiderata. For example, the monad mapping a metric space to its Cauchy completion does not arise from a QET ---we discuss this example in \cref{sec:CC}. However, the above desiderata may be framed in terms of enriched Lawvere Theories, following~\cite{power1999}, while 
turning from the case of sets with algebraic structures to metric spaces with algebraic structures. 
But there is a fundamental problem in doing so: the category $\Met$ of extended metric spaces is not locally finitely presentable but countably presentable. This means that the arities of operators and equations will not be finite sets (as for traditional algebraic theories), nor countable (discrete) metric spaces, and we will have to consider countably-presentable monads in the desiderata above. The good news is that, once these changes have been made, the work of~\cite{Kelly1993,power1999} seems to give us exactly what we need. That is i) a notion of algebraic theory consisting of operations and equations; ii) free algebras for such algebraic theories correspond to countably presentable {\bf enriched} monads on $\Met$ such that the models of the former are the Eilenberg-Moore algebras of the latter; and iii) a notion of Lawvere theory giving a syntax-free presentation of algebraic theories. The bad news is that 
%
the presentations one gets from this framework, while theoretically elegant, are very cumbersome to the point of being of little practical use. In particular, given a countably presentable monad, the associated algebraic theory has as operators of a given arity all elements of the monad applied to that arity. In the case of the Cauchy completion, where we would like one operator, we get a countably infinite number of operators. A further limitation of the enriched Lawvere theories is that they don't cover sound and complete proof systems. 

This paper, therefore, starts from the premise of wanting the best of both worlds, i.e., the theoretical clarity of Lawvere theories and the concise presentations and proof systems of QETs. Thus, we introduce {\em Metric Equational Theories} which extend QETs by allowing operators to have countable metric spaces as arities. This allows us to cover, for example, the countable presentable Cauchy Completion monad. This generalisation, however, creates a difficulty. In both traditional equation theories and QETs, one can first define the terms and then use the terms in equations. This is no longer possible with METs, as metric arities require the use of equations in defining the terms, as some terms might only be defined given that their subterms satisfy certain equations.
However, once we devise mechanisms for handling this increased complexity in METs, we will indeed get all of the above desiderata when suitably generalised to $\Met$, together with sound and complete proof systems for deriving the equality between terms.

\textbf{Outline of the paper.}
In \cref{sec:ELT} we present aspects of the category theory of metric spaces, and the theory of enriched Lawvere theories as it specializes to this case.
In \cref{sec:MET} we introduce the Metric Equational Theories, their categories of models, and proving basic properties of these. In \cref{sec:monad}, we construct a free-forgetful adjunction for each MET, and prove that these correspond to enriched Lawvere theories. 
In \cref{sec:LawisMET}, we carry out the other direction, constructing for each enriched Lawvere theory $\lT$ over $\Met$ an MET whose category of models is equivalent to $\Mod(\lT)$. 
Finally, we prove a completeness theorem for METs in \cref{sec:Completeness}, discuss the MET of Cauchy completeness in \cref{sec:Cauchy}, and consider some special classes of METs and study their corresponding monads in \cref{sec:Specials} ---in particular, we characterize the class of monads axiomatized by ordinary QETs. The paper includes an Appendix containing details proof os some of the results presented.

\section{Enriched Lawvere theories in $\Met$}
\label{sec:ELT}

Our focus is on the category $\Met$ where the objects are \emph{extended metric spaces},  i.e., sets $X$ equipped with a metric $d: X \times X \to [0,\infty]$; and the morphisms are the \emph{nonexpansive} maps, \cite{ros2020metric}.
The \textit{tensor product} $\otimes$ on $\Met$ is given by $(X,d_X) \otimes (Y,d_Y) = (X \times Y, d_{X \otimes Y})$,
where $d_{X \otimes Y}((x,y),(x',y')) = d_X(x,x') + d_Y(y,y')$.

$\Met$ is a \textit{symmetric monoidal category}, with associator, unitor and symmetry given as for the Cartesian symmetric monoidal structure on $\Set$.
Moreover, $\Met$ is closed as a monoidal category, with the internal hom $[X,Y]$ being given by the set of nonexpansive maps $X \to Y$ in the metric $d_{[X,Y]}(f,g) = \sup_x d_Y(f(x),g(x))$.
Moreover, $\Met$ is \textit{countably locally presentable} \cite{Lieberman2017}, and its countable objects are precisely those metric spaces with a countable underlying set \cite{Adamek2022}.

The concept of \textit{enriched Lawvere theory} was proposed by Power in \cite{power1999}.
Up to equivalence, a Lawvere theory is a category with finite products, whose objects are generated by a distinguished object under finite products.
These then classify finitary monads on $\Set$. Power states these results for the finitary case, remarking although that they generalize straightforwardly to any regular cardinal.

Given a regular cardinal $\kappa$ and a $\kappa$-presentable biclosed monoidal category $\cat{V}$,
we obtain a theory of $\cat{V}$-enriched Lawvere theories.
These are defined to be $\cat{V}$-categories generated under products of cardinality less than $\kappa$ and powers by $\kappa$-small objects of $\cat{V}$ by a distinguished object.
These then classify strong (i.e enriched) $\kappa$-presentable monads on $\cat{V}$.
The classical Lawvere theory is obtained by taking $\cat{V} = \Set$ and $\kappa = \aleph_0$.
In this paper, we will consider the case in which $\cat{V} = \Met$ and $\kappa = \aleph_1$. Since $\Met$ is not finitely presentable, we will be looking at \emph{countable-arity} operations in general, but we will also discuss the subset of finitary monads.
To simplify the language, in what follows we will refer to $\aleph_1$-locally presentable categories as \emph{countably locally presentable}, $\aleph_1$-accessible monads as \emph{countable-arity monads}, and so on.


Recall that in a category $\cat{C}$ enriched over $\cat{V}$, a \emph{power} or \emph{cotensor} of an object $X \in \cat{C}$ by $V \in \cat{V}$ is an object $X^V$ so that $\cat{C}(A, X^V) \cong [V,\cat{C}(A,X)]$ (with $[-,-]$ being the internal hom in $\cat{V}$).
In the self-enrichment of $\cat{V}$, powers are given by the internal hom $W^V = [V,W]$.
In the present case of metric spaces, we generally prefer the exponential notation $Y^X$ for this space, which is given by the set of nonexpansive maps $X \to Y$ in the sup-metric $d(f,g) = \sup_x d_Y(f(x),g(x))$.

Note that a category enriched in $\Met$ is simply an ordinary category $\cat{C}$ equipped with a metric on each hom-set,
so that the composition operation $\circ: \cat{C}(Y,Z) \otimes \cat{C}(X,Y) \to \cat{C}(X,Z)$ is nonexpansive.
A functor $\cat{C} \to \cat{D}$ in the enriched sense is simply a functor $F$ between the underlying categories so that each map $F: \cat{C}(X,Y) \to \cat{D}(FX,FY)$ is nonexpansive.
In particular, being enriched is a \emph{property} of an ordinary functor, not extra data.

We will primarily be interested in monads on $\Met$ itself --- here an ``enriched monad'' is just an ordinary monad $(T,\mu,\eta)$ on the category $\Met$ where the functor $T$ is nonexpansive, and the Eilenberg-Moore category of such a monad is just the EM category in the ordinary sense, metrized with the $\sup$-metric restricted to the subset of homomorphisms.

An enriched functor is also called \emph{strong} (because for a closed monoidal category $\cat{V}$, an enriched endofunctor on $\cat{V}$ is equivalent to a strong endofunctor, \cite[Theorem 1.3]{Kock1972}), and we will also use the term \emph{strong monad}.
Almost every functor we work with will be enriched in this sense, so we will often not bother to be precise about the difference.

Recall that a \emph{pseudometric space} is the generalization of metric spaces without the requirement that $d(x,y)=0 \Rightarrow x=y$. There is an obvious category $\PMet$ of pseudometric spaces and nonexpansive maps.
$\Met$ is a full subcategory of $\PMet$, and this inclusion is reflective --- for each pseudometric space $X$, the reflection is given by the quotient $X/\sim$ where $x \sim y$ if $d(x,y) = 0$.
We refer to this as the \emph{metric quotient} of $X$. It will often be convenient to describe metric spaces in this way.

	Let $\CMet$ denote the full subcategory of $\Met$ of countable extended metric spaces.
	$\CMet$ inherits a $\Met$ enrichment.
	Since $\CMet$ is closed under tensor products, the opposite category $\CMet^{\op}$, has all powers by countable objects (and they are simply given by the tensor product).

\begin{definition}
	A \emph{$\Met$ Lawvere theory} is a $\Met$-category $\cat{L}$ equipped with an identity-on-objects
	functor $\CMet^{\op} \to \cat{L}$ which is enriched and preserves powers by countable objects.
\end{definition}

	Note that every object $X$ in $\CMet^{\op}$ can be written as a (countable) power of the singleton $*$ by $X$.
	Hence, if $F$ is the functor $\CMet^{\op} \to \cat{L}$, 
	we have $F(X) = F(*)^X$, and these are all the objects of $\cat{L}$, up to isomorphism.
	Note also that $\CMet^{\op}$ has products (given by coproducts in $\Met$), and these are automatically preserved by $F$, since $$F(X \sqcup Y) = F(*)^{X \sqcup Y} = F(*)^X \times F(*)^Y.$$
	Hereafter, we abuse the notation and denote a $\Met$ Lawvere theory $(\cat{L}, F: \CMet^{\op} \to \cat{L})$ simply by $\cat{L}$.
	We denote the object $F(*)$ by $x$, and use $x^X$, with $X$ a countable metric space, to denote a general object of $\cat{L}$.
Note that, by definition of powers, the hom-object $\cat{L}(x^A,x^B)$ is isomorphic to the metric space $[B,\cat{L}(x^A,x)]$ of maps $B \to \cat{L}(x^A,x)$.
Hence, the hom-objects of a $\Met$ Lawvere theory are determined by the spaces $\cat{L}(x^A,x)$.
We call	the metric space $\cat{L}(x^A,x)$, the \emph{space of $A$-ary operations} of $\cat{L}$.
\begin{definition}
	A \emph{model} of a $\Met$ Lawvere theory $\cat{L}$ is a power-preserving $\Met$-functor
	$M: \cat{L} \to \Met$.
	The \emph{category} $\Mod(\cat{L})$ of models is the category of such functors and their natural transformations.
\end{definition}
Note that $\Mod(\cat{L})$ carries a $\Met$-enrichment: there is a forgetful functor $\Mod(\cat{L}) \to \Met$ which carries a model $M$ to $M(x)$.

The following result is \cite[Theorem 4.3]{power1999} specialized to our case. Let $\Law$ denote the category of $\Met$-Lawvere theories and $\ncat{CMnd}$ the subcategory of countable-arity monads and monad transformations.

\begin{proposition}
	Given a $\Met$ Lawvere theory $\cat{L}$, the category of its models is monadic over $\Met$.
	The monad arising from this adjunction is always of countable arity (it preserves countably filtered colimits).
	This construction induces a functor $\Law \to \Mnd(\CMet)$, which is an equivalence of categories. Its inverse carries a monad $T$ from $\Mnd(\CMet)$ to 
	the dual of its Kleisli category restricted to countable metric spaces.
	
\end{proposition}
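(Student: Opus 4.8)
The plan is to treat this as an instance of Power's general correspondence \cite{power1999}, so the genuine work is to construct the two functors explicitly, exhibit the natural isomorphisms witnessing the equivalence, and verify that the specialisation $\cat{V} = \Met$, $\kappa = \aleph_1$ meets the hypotheses recorded above (namely that $\Met$ is biclosed monoidal and countably locally presentable). I would begin with the adjunction. For a countable space $A$, the enriched Yoneda lemma identifies the representable $\cat{L}(x^A,-)\colon \cat{L} \to \Met$ as a model (representables preserve powers, which are limits), and for any model $M$ one computes $\Mod(\cat{L})(\cat{L}(x^A,-), M) \cong M(x^A) \cong M(x)^A \cong [A, UM]$, exhibiting $\cat{L}(x^A,-)$ as the free model on $A$, with underlying object the space of $A$-ary operations $\cat{L}(x^A,x)$. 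Since every object of $\Met$ is a countably filtered colimit of its countable subspaces and $\Mod(\cat{L})$ is cocomplete, I would extend $\Free$ to all of $\Met$ by left Kan extension, obtaining the enriched left adjoint $F$ to the forgetful functor $U\colon \Mod(\cat{L}) \to \Met$, $M \mapsto M(x)$.

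Next I would establish monadicity and countable arity. Monadicity follows from the enriched form of Beck's theorem: $U$ reflects isomorphisms, since a natural transformation $\alpha$ of models has $\alpha_{x^A} \cong (\alpha_x)^A$ by power preservation, so $\alpha$ is invertible as soon as $\alpha_x = U\alpha$ is; and $\Mod(\cat{L})$ has, and $U$ preserves, coequalizers of $U$-split pairs, because models form a full subcategory of the enriched functor category cut out by a power-preservation condition, and the relevant colimits are computed pointwise at $x$. For countable arity, the key point is that countably filtered colimits in $\Mod(\cat{L})$ are again computed pointwise --- this uses that powers by countable objects commute with countably filtered colimits in $\Met$, which is precisely the countable presentability of the countable spaces. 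Hence $U$ preserves countably filtered colimits; since $F$ preserves all colimits, the monad $T_{\cat{L}} = UF$ preserves countably filtered colimits, and $T_{\cat{L}}(A) \cong \cat{L}(x^A,x)$ on countable $A$. This defines the functor $\Law \to \Mnd(\CMet)$.

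For the inverse I would send a countable-arity monad $T$ to $\cat{L}_T := (\mathrm{Kl}(T)|_{\CMet})^{\op}$, the dual of its Kleisli category restricted to countable spaces, with structure functor the dual of the canonical identity-on-objects functor $\CMet \to \mathrm{Kl}(T)$. To see this is a $\Met$ Lawvere theory I must check preservation of powers by countable objects: the hom-objects are $\cat{L}_T(x^A, x^B) \cong \Met(B, TA) \cong [B, TA] \cong [B, \cat{L}_T(x^A, x)]$, which is exactly the required power. The two round-trips then match on operation spaces: $\cat{L}_{T_{\cat{L}}}(x^A, x) \cong T_{\cat{L}}(A) \cong \cat{L}(x^A, x)$, while starting from $T$ the remonadification yields $T'(A) \cong \cat{L}_T(x^A, x) \cong TA$ on countable $A$, whence $T' \cong T$ by extending along countably filtered colimits using the countable arity of both. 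I would finally promote these object-level isomorphisms to natural isomorphisms of functors, checking compatibility with morphisms, so that monad transformations correspond to theory morphisms via their action on operation spaces.

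I expect the main obstacle to be the countable-arity claim: showing that countably filtered colimits in $\Mod(\cat{L})$ are computed pointwise and preserved by $U$ is where the choice $\kappa = \aleph_1$ and the precise presentability of $\Met$ must genuinely be used, rather than following formally from the abstract framework. A secondary subtlety is bookkeeping the enrichment throughout --- every adjunction, Yoneda argument and colimit is to be taken in the $\Met$-enriched sense --- together with verifying that the inverse functor really lands among Lawvere theories (the power-preservation check above) rather than merely producing a $\Met$-category.
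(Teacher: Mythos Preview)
The paper does not give a proof of this proposition at all: it is introduced with the sentence ``The following result is \cite[Theorem 4.3]{power1999} specialized to our case'' and then simply stated. So your proposal is doing strictly more than the paper does here --- the paper treats this as a black-box citation of Power's general result, having already verified (earlier in \cref{sec:ELT}) the hypotheses that $\Met$ is symmetric monoidal closed and countably locally presentable with countable spaces as the $\aleph_1$-presentable objects.

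Your sketch is essentially a correct reconstruction of Power's argument in this specific instance: free models on countable $A$ as representables $\cat{L}(x^A,-)$ via enriched Yoneda, extension to all of $\Met$ by left Kan extension along the inclusion of countable spaces, monadicity via enriched Beck, countable arity from the commutation of countable powers with countably filtered colimits, and the inverse via the opposite of the restricted Kleisli category. The one place you should be a little more careful is the appeal to cocompleteness of $\Mod(\cat{L})$ in order to form the Kan extension --- this is true, but it is not immediate from the definition and is itself part of what Power's machinery provides (alternatively, one can construct the left adjoint directly on all of $\Met$ and deduce cocompleteness afterwards from monadicity). Everything else in your outline is sound and matches the standard route.
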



\section{Metric equational theories}\label{sec:MET}

In this section, we revisit the theory of quantitative equational logic with the intention of making use of metric arities and produce explicit $\Met$-enriched Lawvere theories that we call \emph{metric equational theories (MET)}. To this end, we need to extend the syntax proposed in \cite{raduQAR2016} to properly encode operators with metric arities. We will have two kinds of judgement. To the structural judgements of type 
$\Gamma \vdash s =_\epsilon t$ from QET we will add \textit{formational judgments} of type $\Gamma \vdash t \ok$, asserting that in the given context, $t$ is a well-formed term -- i.e., it is provably well-defined. The \textit{formational atoms} of type $t \ok$ (for terms $t$) together with the \textit{quantitative equalities} of type $s=_\e t$ are the building blocks for our judgements.
We combine these types of judgment because they guarantee that the assertions we use involve well-formed terms, so we prove $t \ok$ before applying inference rules involving $t$. But to know which terms are well-formed, we reason about their distances.
A judgment that a term is well-defined ultimately boils down to a judgment about distances between the subterms -- as such, the theory could be rewritten without the formational judgments. However, they are a useful bookkeeping device and simplify the presentation.

From the start, we're faced with a difficulty not seen in the classical case. Since our operations have arities in metric spaces, their domain of definition depends on the distance between the arguments. This means that the set of terms under consideration is not the entire set of trees of operators as in the ordinary case, but only a subset. But, crucially, the set of well-formed terms depends not only on the context, but on the equational axioms.

While it would be possible to give mutually inductive definitions of ``theory'', ``sequent'' and ``term'', we find it simpler to consider the whole set of ``preterms'' and define the well-formed subset.

\begin{definition}
	A \emph{metric signature} $\Omega$ is a set of operation symbols, each equipped with a \emph{metric arity}, which is a countable cardinal 
	equipped with a metric. We write $f: N \in \Omega$ if $f$ has the arity $N = (N,d)$. 
	For a set $X$ of variables, $\tilde{\Omega}(X)$ is the set of \emph{preterms}, given inductively by:
	\begin{enumerate}
		\item For each $x\in X$, there is a preterm $x \in \tilde{\Omega}(X)$.
		\item If $t_i$ is a preterm for each $i \in N$, and $f: N \in \Omega$, there is a preterm $f(t_1, \dots )$.
	\end{enumerate}
\end{definition}
\begin{definition}
	A \emph{context $\Gamma$ over a set $X$ of variables} is a list of quantitative equalities $x =_\e y$, where $x,y\in X$ and $\e \in\Rp$.
	\\Given a metric space $M$ and an assignment $\alpha:X \to M$ of the variables, $\alpha$ satisfies the equation $x =_\e y$ iff $d(\alpha(x), \alpha(y)) \leq \e$.
	A variable assignment $\alpha: X \to M$ satisfies a context if it satisfies all the equations.
\end{definition}

\begin{definition}
	Fix a countably infinite set of variables $X$.
	A collection of judgments (formational and structural) is called a \emph{deducibility relation} if it is closed under the inference rules in Table \ref{tab:rules} stated for arbitrary $x,x_{i} \in X$, $f: (N,d) \in \Omega$, $~~t,t',t_{i},t_{i}',u_{i}\in \tilde{\Omega}(X)$, $\phi$ either a quantitative equality or a formational atom, and $\epsilon,\epsilon' \in \Rp$.
	\\Given a set $S$ of judgments, the least deducibility relation generated by them is denoted $\vdash_{S}$.
\end{definition}

\begin{table*}	
	\begin{align*}
		\begin{array}{c}
			\begin{aligned}
				\infer[\mathbf{Var}]{\Gamma \vdash x \ok}{}~~~~~
				&&&
				\infer[\mathbf{Assum}]{\Gamma \vdash x=_{\epsilon} y}{x =_{\epsilon} y \in \Gamma}~~~
				&&&
				\infer[\mathbf{Refl}]{\Gamma \vdash t =_{0} t}{\Gamma \vdash t \ok}~~~
				&&&
				\infer[\mathbf{Symm}]{\Gamma \vdash t =_{\epsilon} t'}{\Gamma \vdash t' =_{\epsilon} t}
			\end{aligned}
			\\\\
			\begin{aligned}
				\infer[\mathbf{Triang}]{\Gamma \vdash t =_{\epsilon+\epsilon'} t''}{\Gamma \vdash t=_{\epsilon} t', \Gamma \vdash t' =_{\epsilon'} t''}~~~
				&&&
				\infer[\mathbf{Max}]{\Gamma \vdash t =_{\epsilon} t'}{\epsilon' < \epsilon,\ \Gamma \vdash t =_{\epsilon'} t'}~~~
				&&&
				\infer[\mathbf{Cont}]{\Gamma \vdash t =_{\epsilon} t'}{\Gamma \vdash t =_{\epsilon'} t\ \forall \epsilon' > \epsilon}
			\end{aligned}
			\\\\
			\begin{aligned}
				\infer[\mathbf{Nexp}]{\Gamma \vdash f((s_i)) =_\epsilon f((t_i))}{\Gamma \vdash s_i =_{d(i,j)} s_j,\ \Gamma \vdash t_i =_{d(i,j)} t_j,\ \Gamma \vdash s_i =_\epsilon t_i \forall i,j}~~~~
				&&&
				\infer[\mathbf{Subst}]{\Gamma \vdash s[u_i/x_i] =_\epsilon t[u_i/x_i]}{\Gamma \vdash u_i =_{\delta_{ij}} u_j, \{x_{i} =_{\delta_{ij}} x_{j}\} \vdash s =_{\epsilon} t}
			\end{aligned}
			\\\\
			\begin{aligned}
				\infer[\mathbf{App}]{\Gamma \vdash f((t_{i})) \ok}{\forall i: \Gamma \vdash t_i \ok,\ \forall i,j, d(i,j)< \infty : \Gamma \vdash t_i =_{d(i,j)} t_j}
			\end{aligned}
		\end{array}
	\end{align*}	
	\caption{Deduction system}
	\label{tab:rules}
\end{table*}

Since the arities of the operations in a theory determine their domain of definition, we don't need formational judgements as \emph{axioms}.
However, when axiomatizing a theory, we need to ensure that for each axiomatic equation, the terms are well-formed.
But which terms are well-formed in a given context depends on the theory, since the theory provides bounds on distance that may imply well-formedness.
So, we have a kind of cyclic dependency, which makes things more difficult than they would be without metric arities.
However, in practice, the theory is not difficult to work with. First, since well-formedness of a term depends only on the distances between its subterms, there is no actual problem of cyclical dependency ---we can proceed from subterms to larger terms, forming judgments about their distances and well-definedness inductively.
And second, when working with a set of equational axioms which define the theory, it suffices to prove that those equations are well-formed ---then any sequents provable from them will also be well-formed.

\begin{definition}
	A collection of judgments $S$ is called \emph{well-formed} if, whenever $\Gamma \vdash t =_\epsilon s \in S$, we also have $\Gamma \vdash t \ok,~~ \Gamma \vdash s \ok \in S$.
	A well-formed collection of judgments $\eT$ is called a \emph{metric equational theory} (here just a \emph{theory}) if $\et{T} =\ \vdash_{\et{T}^=}$, where $\et{T}^= \subseteq \eT$ is the subset of equational judgments.
\end{definition}

\begin{proposition}\label{prop1}
	Let $S$ be a collection of equational judgments. Suppose that for each $\Gamma \vdash t =_\epsilon s \in S$, we have $\Gamma \vdash_S t \ok,~~ \Gamma \vdash_S s \ok$. Then $\et{T} =\ \vdash_S$ is a theory.
\end{proposition}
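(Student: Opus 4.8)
The plan is to verify the two conditions in the definition of a theory for $\eT = \vdash_S$: first, that $\eT$ is \emph{well-formed}, i.e.\ every equational judgment $\Gamma \vdash t =_\epsilon s$ derivable in $\vdash_S$ comes with $\Gamma \vdash_S t \ok$ and $\Gamma \vdash_S s \ok$; and second, that $\eT = \vdash_{\eT^=}$. The second condition is immediate from the extremal property of the operator $\vdash_{(-)}$. Since $S$ consists of equational judgments and $S \subseteq \vdash_S = \eT$, we have $S \subseteq \eT^=$, so monotonicity gives $\vdash_S \subseteq \vdash_{\eT^=}$. Conversely $\eT^= \subseteq \eT = \vdash_S$, and $\vdash_S$ is itself a deducibility relation containing $\eT^=$, so the least such relation satisfies $\vdash_{\eT^=} \subseteq \vdash_S$. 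Hence $\vdash_{\eT^=} = \vdash_S = \eT$.

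The substance is the first condition, which I would prove by induction on the derivation of an equational judgment $\Gamma \vdash t =_\epsilon s$ in $\vdash_S$. The axiom case (the judgment lies in $S$) is exactly the hypothesis of the proposition. The cases $\mathbf{Assum}$ (both terms are variables, apply $\mathbf{Var}$) and $\mathbf{Refl}$ (the premise $\Gamma \vdash t \ok$ is the desired conclusion) are direct, while $\mathbf{Symm}$, $\mathbf{Triang}$, $\mathbf{Max}$ and $\mathbf{Cont}$ follow by reading off well-formedness of the relevant subterms from the induction hypotheses applied to their premises (for $\mathbf{Cont}$, apply the hypothesis to any single $\epsilon' > \epsilon$). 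For $\mathbf{Nexp}$, concluding $\Gamma \vdash f((s_i)) =_\epsilon f((t_i))$, the induction hypothesis on the premises $\Gamma \vdash s_i =_\epsilon t_i$ yields $\Gamma \vdash_S s_i \ok$ and $\Gamma \vdash_S t_i \ok$ for every $i$; combining these with the distance premises $\Gamma \vdash s_i =_{d(i,j)} s_j$ and $\Gamma \vdash t_i =_{d(i,j)} t_j$, which are precisely the hypotheses demanded by $\mathbf{App}$, we conclude $\Gamma \vdash_S f((s_i)) \ok$ and $\Gamma \vdash_S f((t_i)) \ok$.

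The one case resisting a direct appeal to the induction hypothesis is $\mathbf{Subst}$, and I expect this to be the main obstacle. Here $\Gamma \vdash s[u_i/x_i] =_\epsilon t[u_i/x_i]$ is derived from $\Gamma \vdash u_i =_{\delta_{ij}} u_j$ and $\{x_i =_{\delta_{ij}} x_j\} \vdash s =_\epsilon t$; the induction hypothesis supplies $\{x_i =_{\delta_{ij}} x_j\} \vdash_S s \ok$ and, via the premises $\Gamma \vdash u_i =_{\delta_{ij}} u_j$, also $\Gamma \vdash_S u_i \ok$, but well-formedness of the substituted term $s[u_i/x_i]$ is a genuinely new assertion. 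To bridge the gap I would first prove a \emph{substitution lemma for formational judgments}: if $\{x_i =_{\delta_{ij}} x_j\} \vdash_S s \ok$ and $\Gamma \vdash_S u_i =_{\delta_{ij}} u_j$ with $\Gamma \vdash_S u_i \ok$ for all $i$, then $\Gamma \vdash_S s[u_i/x_i] \ok$. This is proved by structural induction on $s$: the variable case is exactly the assumption $\Gamma \vdash_S u_k \ok$, and for $s = f((s_\ell))$ the derivation of $s \ok$ ends in $\mathbf{App}$, so the structural induction hypothesis gives $\Gamma \vdash_S s_\ell[u_i/x_i] \ok$ while $\mathbf{Subst}$ applied to the distance premises $\{x_i =_{\delta_{ij}} x_j\} \vdash s_\ell =_{d(\ell,m)} s_m$ gives $\Gamma \vdash_S s_\ell[u_i/x_i] =_{d(\ell,m)} s_m[u_i/x_i]$; a final $\mathbf{App}$ then yields $\Gamma \vdash_S f((s_\ell[u_i/x_i])) \ok$, i.e.\ $\Gamma \vdash_S s[u_i/x_i] \ok$. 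Crucially, this lemma only invokes the $\mathbf{Subst}$ rule and not the well-formedness claim itself, so there is no circularity with the main induction. Applying it to both $s$ and $t$ closes the $\mathbf{Subst}$ case, completing the induction and hence the proof that $\eT$ is a theory.
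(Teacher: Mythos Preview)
Your proof is correct and follows essentially the same route as the paper. Both arguments reduce the problem to well-formedness, handle all rules except $\mathbf{Subst}$ directly, and treat $\mathbf{Subst}$ via a separate structural induction on the term $s$ that combines $\mathbf{App}$ with $\mathbf{Subst}$ applied to the subterm distance premises. The only organizational difference is that the paper first isolates the inversion principle ``if $\Gamma \vdash f((t_i)) \ok$ then $\Gamma \vdash t_i =_{d(i,j)} t_j$'' as a preliminary lemma (proved by a closure-under-rules argument), whereas you invoke the same inversion inline via ``the derivation of $s\ \ok$ ends in $\mathbf{App}$''; since $S$ contains only equational judgments this inversion is immediate, so the difference is purely presentational.
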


	Let $\Gamma$ be a context and let $\hat{X}_\Gamma$ be the set $X$ equipped with the pseudometric\footnote{This minimum is attained because of $\mathbf{Cont}$, and this is a pseudometric because of $\mathbf{Symm}, \mathbf{Triang}$ and $\mathbf{Refl}$.}
	$$d(x,x') = \min \{\epsilon \mid \Gamma \vdash x =_\epsilon x'\}.$$
	$X_\Gamma$ is the metric reflection of $\hat{X}_\Gamma$.
	Note that a nonexpansive map $\alpha:X_\Gamma \to M$ is precisely a variable assignment that satisfies $\Gamma$.

\begin{definition}
	Let $\Omega$ be a signature. A \emph{model} $M$ of $\Omega$ is a metric space $M$ such that for each $f: A \in \Omega$, there is a map $M[f]: M^A \to M$, where $M^A$ is the metric space of nonexpansive maps from $A$ to $M$ equipped with the supremum metric. 
	\\Given models $M, N$ of $\Omega$, a homomorphism is a nonexpansive map $\phi: M \to N$ so that $$\phi(M[f](x_1, \dots) = N[f](\phi(x_1), \phi(x_2), \dots).$$ 
\end{definition}

	Note that if $(x_i) \in M^A$, then $(\phi(x_i)) \in N^A$ since $\phi$ is nonexpansive. The category of models and homomorphisms is denoted $\Mod(\Omega)$.
	
	We will now define recursively what it means for $M$ to satisfy a judgment, and the interpretation $M[t]: M^{X_\Gamma} \to M$ of every term as a function, whenever $M$ satisfies $\Gamma \vdash t \ok$.
	
	\begin{enumerate}
		\item Every model satisfies every judgment $\Gamma \vdash x \ok$ for $x$ a variable. $M[x](\alpha)$ is simply $\alpha(x)$.
		\item $M$ satisfies $\Gamma \vdash f(t_1, \dots) \ok$ if it satisfies each $\Gamma \vdash t_i \ok$ and $d(M[t_i],M[t_j]) \leq d_f(i,j)$ for all $i,j$ (using the supremum metric on the function space). In this case $M[f(t_1, \dots)](\alpha) = M[f](M[t_1](\alpha), \dots)$.
		\item $M$ satisfies $\Gamma \vdash t =_\epsilon s$ if it satisfies both $\Gamma \vdash t \ok$ and $\Gamma \vdash s \ok$, and if $d(M[t], M[s]) \leq \epsilon$ in the function space.
	\end{enumerate}
	We write $\Gamma \vDash_M \phi$ if $M$ satisfies the sequent $\Gamma \vdash \phi$.
	
	\begin{definition}
	A model of a theory $\et{T}$ is a model of the signature of $\et{T}$ which satisfies every sequent in $\et{T}$.
	The category of models of $\et{T}$, is the full subcategory $\Mod(\et{T}) \subseteq \Mod(\Omega)$ spanned by the models of $\et{T}$.
	\end{definition}

The following proposition is readily verified:
\begin{proposition}[Soundness]
	Let $M$ be any model of $\Omega$. Then the relation $\vDash_M$ is well-founded and closed under the inference rules in Table \ref{tab:rules}. In particular, to prove that $M$ is a model of a theory $\et{T}$ generated by some axioms $S$, it suffices to prove that $M$ satisfies all the axioms.
\end{proposition}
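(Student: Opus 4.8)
The plan is to prove both claims by structural induction on preterms, exploiting that a preterm is a finite-depth tree (even though a node may have countably many children), so ordinary structural induction remains available. First I would settle well-foundedness, i.e.\ that the mutual recursion defining $\vDash_M$ and the interpretation $M[t]$ terminates and is well-defined. Concretely, I would show by induction that whenever $\Gamma \vDash_M t \ok$, the map $M[t]\colon M^{X_\Gamma}\to M$ is a well-defined nonexpansive map. The only nontrivial case is an application node $f((t_i))$: the induction hypothesis supplies each $M[t_i]$, and the defining condition $d(M[t_i],M[t_j])\le d_f(i,j)$ in the sup-metric guarantees that for each assignment $\alpha$ the tuple $(M[t_i](\alpha))_i$ is a nonexpansive map $A\to M$, hence a genuine element of $M^A$, so that $M[f]$ may legitimately be applied. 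Nonexpansiveness of $M[f((t_i))]$ then follows by composing the nonexpansive tupling $M^{X_\Gamma}\to M^A$, $\alpha\mapsto (M[t_i](\alpha))_i$, with the nonexpansive $M[f]$.

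Closure under the ``local'' rules then follows from elementary properties of the sup-metric on $M^{X_\Gamma}\to M$. Refl, Symm, Triang and Max use reflexivity, symmetry, the triangle inequality and monotonicity; Cont uses that $\inf_{\epsilon'>\epsilon}\epsilon'=\epsilon$ in $[0,\infty]$. For Var and App I would observe that the conclusions are literally clauses (1) and (2) of the definition of $\vDash_M$ --- for App the case $d(i,j)=\infty$ imposes no constraint and is vacuous. Assum uses that an element $\alpha\in M^{X_\Gamma}$ is by definition a nonexpansive map $X_\Gamma\to M$, so that $d_M(\alpha(x),\alpha(y))\le d_{X_\Gamma}(x,y)\le\epsilon$ whenever $x=_\epsilon y\in\Gamma$. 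Nexp is the first point where the operation structure is used: nonexpansiveness of $M[f]\colon M^A\to M$ gives, pointwise in $\alpha$, the bound $d_M(M[f]((M[s_i](\alpha))),M[f]((M[t_i](\alpha))))\le\sup_i d_M(M[s_i](\alpha),M[t_i](\alpha))\le\epsilon$, and taking the supremum over $\alpha$ yields the conclusion.

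The main obstacle is Subst. I would first prove a semantic substitution lemma by induction on $s$. Writing $\Delta=\{x_i=_{\delta_{ij}}x_j\}$ and fixing an assignment $\alpha\colon X_\Gamma\to M$ satisfying $\Gamma$, define $\beta(x_i)=M[u_i](\alpha)$. The hypothesis $\Gamma\vDash_M u_i=_{\delta_{ij}}u_j$ gives $d_M(\beta(x_i),\beta(x_j))\le\delta_{ij}=d_{X_\Delta}(x_i,x_j)$, so $\beta$ is a well-defined nonexpansive map $X_\Delta\to M$, i.e.\ a valid assignment satisfying $\Delta$. The lemma asserts that $\Gamma\vDash_M s[u_i/x_i]\ok$ whenever $\Delta\vDash_M s\ok$, and that $M[s[u_i/x_i]](\alpha)=M[s](\beta)$; both are established simultaneously by the induction, the application case again reducing to the fact that the required distance bounds are preserved because the $M[u_i]$ are nonexpansive. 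Granting the lemma, the premise $\Delta\vDash_M s=_\epsilon t$ gives $d(M[s](\beta),M[t](\beta))\le\epsilon$, hence $d(M[s[u_i/x_i]](\alpha),M[t[u_i/x_i]](\alpha))\le\epsilon$ for every $\alpha$, and taking the supremum over $\alpha$ yields $\Gamma\vDash_M s[u_i/x_i]=_\epsilon t[u_i/x_i]$.

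Finally, the ``in particular'' clause is immediate once closure is established: by what precedes, the collection of judgments satisfied by $M$ is a deducibility relation, and if it contains the axioms $S$ then it contains the least such relation $\vdash_S=\eT$; hence $M$ satisfies every judgment of $\eT$ and is therefore a model of $\eT$.
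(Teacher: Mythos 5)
The paper offers no proof of this proposition (it is declared ``readily verified''), and your argument is the correct and essentially canonical verification: structural induction on preterms for well-definedness of $M[t]$, elementary sup-metric facts for the local rules, nonexpansiveness of $M[f]$ for \textbf{Nexp}, and a semantic substitution lemma for \textbf{Subst}, after which the ``in particular'' clause follows because the satisfied judgments form a deducibility relation containing $S$. One small interpretive point: judging from the appendix proof of Proposition \ref{prop1}, the paper uses ``well-founded'' to mean what it elsewhere calls \emph{well-formed} (every satisfied equation $\Gamma \vdash t =_\epsilon s$ is accompanied by $\Gamma \vdash t \ok$ and $\Gamma \vdash s \ok$), which for $\vDash_M$ is immediate from clause 3 of the definition of satisfaction; your reading (termination of the recursion defining $M[t]$) is the more substantive content and worth proving anyway, so nothing is lost.
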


	A theory is always defined over some signature $\Omega$. We will often just leave the signature implicit when speaking of a theory.
	When $\eT$ is a theory over $\Omega$ and $f$ is a symbol in $\Omega$, we will abuse notation by writing $f: N \in \eT$, speak of ``an operation in $\eT$'', and so on.


\section{Free models and monadicity}\label{sec:monad}
We now turn to the comparison of METs and Lawvere theories over $\Met$.
First, we will prove that for each MET $\eT$, the forgetful functor admits a left adjoint
(taking each metric space to a free model on it), and that this adjunction is monadic.
After proving that the monads are of countable arity, it follows by the general theory of enriched Lawvere theories (\cite{power1999}) that these monads (and hence the categories of models) come from enriched Lawvere theories.

First, we'll prove that all theories have initial models.
Then, given a theory $\eT$ and a metric space $A$, we construct a new theory
whose models are models of $\eT$ equipped with a map from $A$ (objects of the comma category $\Mod(\eT)_{A/}$).
The initial models of these theories are precisely the free models of $\eT$, and the fact that they all exist
proves the existence of a left adjoint.

\begin{proposition}\label{prop2}
	Let $\eT = (\Omega, \eT)$ be a metric equational theory.
	Consider the following metric space:
	\begin{itemize}
		\item its elements are the closed terms $t \in \tilde{\Omega}(\emptyset)$ such that $\vdash_\eT t \ok$, quotiented by the equivalence relation defined by $t \sim t' \Leftrightarrow ~~\vdash_\eT t =_0 t'$.
		\item for $t,t'\in\tilde\Omega(\emptyset)$,  $d([t],[t']) = \min \{\epsilon \mid \vdash_\eT t =_\epsilon t'\}$
	\end{itemize}
	Given an operation $f: A \in \Omega$ and a collection of elements $[t_i]$ satisfying $d([t_i],[t_j]) \leq d_A(i,j)$ for all $i,j \in A$, $[f((t_i)_{i \in A}]$ is another well-defined element, and this gives a model of $\eT$.
	This model is the initial model of $\eT$.
\end{proposition}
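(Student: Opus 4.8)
The plan is to verify, in turn, that the described data define a metric space $M$, that $M$ is a model of the signature $\Omega$, that it satisfies every axiom of $\eT$, and finally that it is initial. The construction of $M$ parallels that of $X_\Gamma$ for the empty context: the set $\{\epsilon \mid \vdash_\eT t =_\epsilon t'\}$ is upward-closed by $\mathbf{Max}$ and closed under infima by $\mathbf{Cont}$, so its minimum is attained and $d$ is well-defined; symmetry, the triangle inequality and $d([t],[t]) = 0$ follow from $\mathbf{Symm}$, $\mathbf{Triang}$ and $\mathbf{Refl}$ respectively, while $d([t],[t']) = 0$ forces $[t] = [t']$ by the definition of the quotient. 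Independence of $d$ from the chosen representatives is again a consequence of $\mathbf{Triang}$ and $\mathbf{Symm}$.

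Next I would define the operations. Given $f : A \in \Omega$ and any $\sigma \in M^A$, choosing representatives $t_i$ of the classes $\sigma(i)$ yields a family with $\vdash_\eT t_i \ok$ and $\vdash_\eT t_i =_{d(i,j)} t_j$ for all $i,j$ with $d(i,j) < \infty$ (the latter because $\sigma$ is nonexpansive and the minimum defining $d$ is attained), so $\mathbf{App}$ gives $\vdash_\eT f((t_i)) \ok$ and $[f((t_i))]$ is a legitimate element; I set $M[f](\sigma) = [f((t_i))]$. This is independent of the representatives and nonexpansive by $\mathbf{Nexp}$ --- the case $\epsilon = 0$ gives well-definedness on classes, and the general case gives nonexpansiveness. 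Thus $M$ is a model of $\Omega$. To see that $M$ satisfies a generating axiom $\Gamma \vdash t =_\epsilon s$ with $\Gamma = \{x_i =_{\delta_{ij}} x_j\}$, note that a variable assignment $\alpha : X_\Gamma \to M$ satisfying $\Gamma$ is exactly a choice of representatives $u_i$ with $\vdash_\eT u_i =_{\delta_{ij}} u_j$, and that $M[t](\alpha) = [t[u_i/x_i]]$, $M[s](\alpha) = [s[u_i/x_i]]$ by the recursive definition of the interpretation; then $\mathbf{Subst}$ yields $\vdash_\eT t[u_i/x_i] =_\epsilon s[u_i/x_i]$, i.e.\ $d(M[t](\alpha), M[s](\alpha)) \leq \epsilon$. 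By Soundness, satisfying the generators suffices, so $M$ is a model of $\eT$.

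For initiality, given any model $N$ of $\eT$ I would set $h([t]) = N[t]$, where the closed term $t$ is interpreted as the unique point $N[t] \in N$ (the domain $N^\emptyset$ being a singleton). Since $N$ satisfies every judgment derivable in $\eT$ by Soundness, $\vdash_\eT t =_0 t'$ gives $N[t] = N[t']$, so $h$ is well-defined on classes, and $\vdash_\eT t =_{d([t],[t'])} t'$ gives $d(N[t], N[t']) \leq d([t],[t'])$, so $h$ is nonexpansive. The homomorphism law is immediate: if $\sigma(i) = [t_i]$ then $h(M[f](\sigma)) = h([f((t_i))]) = N[f((t_i))] = N[f]((N[t_i])) = N[f]((h(\sigma(i))))$, using that $\vdash_\eT f((t_i)) \ok$ and hence $N \vDash f((t_i)) \ok$. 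Uniqueness follows by induction on the structure of closed terms: any homomorphism must agree with $N[-]$ on the nullary operations, and the homomorphism law propagates this agreement up the term tree.

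The main obstacle, and the step requiring the most care, is the verification that $M$ satisfies the axioms: it is exactly here that the formational judgments and the rule $\mathbf{Subst}$ do their work, translating an open judgment in context $\Gamma$ into a closed judgment about substituted terms, and one must check that nonexpansive assignments $X_\Gamma \to M$ correspond precisely to tuples of closed terms meeting the distance constraints of $\Gamma$, so that the premises of $\mathbf{Subst}$ are genuinely available. The remaining subtlety is purely at the level of well-formedness --- ensuring via $\mathbf{App}$ that the operations are total on $M^A$ and via $\mathbf{Nexp}$ that they descend to the quotient --- which is the new ingredient absent in the classical, non-metric-arity setting.
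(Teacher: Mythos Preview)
Your proposal is correct and follows essentially the same route as the paper's proof: both verify the metric-space axioms via $\mathbf{Refl}$, $\mathbf{Symm}$, $\mathbf{Triang}$, $\mathbf{Cont}$, obtain well-formedness and nonexpansiveness of the operations from $\mathbf{App}$ and $\mathbf{Nexp}$, establish satisfaction of the equations via $\mathbf{Subst}$, and prove initiality by sending $[t]$ to its interpretation in the target model. The only cosmetic difference is that you check only the generating axioms and then appeal to Soundness, whereas the paper checks an arbitrary equation of $\eT$ directly; and the paper briefly addresses the edge case of an empty target model (no nullary symbols, hence no closed terms), which you omit but which is harmless.
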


\begin{proof}
	We denote this metric space by $\Free^\eT(\emptyset)$ (clearly the initial model is the free model on the empty space -- we construct an entire functor $\Free^\eT$ later).
	
	Let $t, t', s$ be elements and suppose $\vdash t =_0 t'$, and $\vdash t =_\epsilon s$.
	Then using $\mathbf{Triang}$ and $\mathbf{Symm}$, also $\vdash t' =_\epsilon s$, and vice versa.
	Hence the distance is well-defined on equivalence classes. $\mathbf{Refl}, \mathbf{Symm}, \mathbf{Triang}$ straightforwardly imply that it's an (extended) metric.
	(Note that the minimum defining $d$ is always attained, by $\mathbf{Cont}$).
	
	$\mathbf{App}$ implies that applying functions to a collection of well-formed closed terms with suitable bounds on their distance results in another well-formed closed term.
	$\mathbf{NExp}$ implies that this is well-defined (if we replace each input term with an equivalent one, the resulting terms are equivalent) and nonexpansive, so these operations $\Free^\eT(\emptyset)$ of $\Omega$.
	
	Given some equation $\Gamma \vdash s =_\epsilon t$ in $\eT$ using the variables $\{x_i\}$, and elements $u_i \in \Free^\eT(\emptyset)$ so that $d([u_i],[u_j]) \leq \epsilon$ whenever $x_i =_\epsilon x_j \in \Gamma$ (in other words, a variable assignment satisfying $\Gamma$), by $\mathbf{Subst}$ this variable assignment also satisfies $s =_\epsilon t$. Hence, this is a model of $\eT$.
	
	Now suppose $M$ is another model of $\eT$.
	Note that if $t$ is a term without variables, $M[t]: M^{X_\Gamma} \to M$ is constant (by induction on $t$) - let's abuse notation by writing $M[t] \in M$ for the constant value of this map.
	(If $M$ is empty, of course, this definition won't make sense. But in that case $\eT$ must have no constant symbols, and so there can't be any terms without variables).
	Define a map $\phi: \Free^\eT(\emptyset) \to M$ by sending each class $[t]$ into $M[t]$.
	Since $M$ satisfies $\eT$, if $[t] = [t']$, then $d_M(M[t],M[t']) = 0$, so they are equal. Hence, this is well-defined.
	Analogously, if $d(t,t') = \epsilon$, then $d_M(M[t],M[t']) \leq \epsilon$, so this is a nonexpansive map.
	It's clearly a homomorphism.
	On the other hand, clearly the value of any homomorphism on closed terms is determined ---it must go to the interpretation of that term. So this is unique.
\end{proof}

\begin{definition}
	Let $A$ be a metric space.
	Then $\Omega^A$ is the signature with one nullary operation $[a]$ for each point in $A$,
	and $\eT(A)$ is the theory over this signature generated by the sequents $\vdash [a] =_{d(a,a')} [a']$ for each $a,a' \in A$.
\end{definition}

Recall that, given a functor $F: \cat{C} \to \cat{D}$ and an object $A \in \cat{D}$, the comma category $\cat{C}_{A/}$ has objects
pairs $(X \in \cat{C}, f: A \to FX)$, and morphisms from $(X,f) \to (Y, g)$ given by $\phi: X \to Y \in \cat{C}$ so that $gF(\phi) = f$ (i.e so that the obvious triangle in $\cat{D}$ commutes).

\begin{proposition}\label{prop3}
	As categories over $\EMet$, $\Mod(\eT(A)) = \Met_{/A}$.
\end{proposition}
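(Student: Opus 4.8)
The plan is to prove this by directly unwinding the definition of a model of $\eT(A)$ and matching it, object for object and morphism for morphism, with the comma category in the statement, whose objects are metric spaces $M$ equipped with a nonexpansive map $\alpha\colon A \to M$. The starting observation is that each generator $[a]$ of $\Omega^A$ is \emph{nullary}, so its arity is the empty metric space; consequently $M^{\emptyset}$ is the one-point space and the interpretation $M[[a]]\colon M^{\emptyset} \to M$ is simply a choice of point $M[[a]] \in M$. Thus a model of the bare signature $\Omega^A$ is exactly a metric space $M$ together with an $A$-indexed family of points, i.e.\ an arbitrary set-theoretic function $\bar\alpha\colon A \to M$, $a \mapsto M[[a]]$.

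First I would treat the object correspondence. By the Soundness proposition, to check that such an $M$ is a model of the theory $\eT(A)$ it suffices to check that it satisfies the generating axioms $\vdash [a] =_{d(a,a')} [a']$; these are well-formed since constants are trivially $\ok$, so $\eT(A)$ is a legitimate theory by \cref{prop1}. Unwinding the satisfaction clause, $M$ satisfies $\vdash [a] =_{d(a,a')} [a']$ precisely when $d(M[[a]], M[[a']]) \le d(a,a')$, and this holding for all $a,a'$ is exactly the statement that $\bar\alpha$ is nonexpansive. Hence models of $\eT(A)$ are in bijection with pairs $(M, \bar\alpha\colon A \to M)$ with $\bar\alpha$ nonexpansive, i.e.\ with the objects of the comma category.

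Next I would match the morphisms. A homomorphism $\phi\colon M \to N$ of $\Omega^A$-models is a nonexpansive map that commutes with every operation, which for the nullary $[a]$ means $\phi(M[[a]]) = N[[a]]$ for all $a \in A$; this is precisely the condition $\phi \circ \bar\alpha_M = \bar\alpha_N$, i.e.\ that the evident triangle under $A$ commutes. So homomorphisms of models correspond exactly to morphisms in the comma category, and this correspondence visibly respects identities and composition, giving a functor that is bijective on objects and on hom-sets. Since both sides send a model, respectively an object $(M,\bar\alpha)$, to the same underlying space $M$, the forgetful functors to $\EMet$ agree, so the resulting isomorphism is one of categories over $\EMet$, as claimed.

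There is essentially no deep obstacle here: the content lies entirely in correctly identifying nullary interpretations with points and in invoking Soundness to reduce model-hood to the defining axioms. The only points needing a moment's care are the degenerate cases — when $A = \emptyset$ the signature is empty and both sides reduce to $\Met$ itself, while when $A \neq \emptyset$ the presence of constants forces every model (and every object under $A$) to have nonempty carrier, so the edge case about empty carriers flagged in the proof of \cref{prop2} causes no mismatch.
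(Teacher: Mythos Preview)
Your proof is correct and follows essentially the same route as the paper's: identify nullary interpretations with points, observe that the generating axioms are exactly the nonexpansiveness condition on the resulting map $A \to M$, and check that homomorphisms are precisely the morphisms under $A$. The paper's version is simply a terser statement of the same unwinding, without the explicit appeals to Soundness and \cref{prop1} or the discussion of edge cases.
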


	Let $\eT, \eT'$ be two theories. Then, we denote by $\eT \sqcup \eT'$ the disjoint union of the two theories, which is the theory with signature $\Omega^\eT \sqcup \Omega^{\eT'}$,
	and generated by the union of the sequents in $\eT$ and $\eT'$.

\begin{proposition}\label{prop4}
	\[\Mod(\eT \sqcup \eT') \cong \Mod(\eT) \times_\Met \Mod(\eT').\]
	In other words, to give a space the structure of a model of $\eT \sqcup \eT'$ is simply to give it independently the structure of a model of each theory; and a homomorphism is just a function which is a homomorphism for each theory separately.
\end{proposition}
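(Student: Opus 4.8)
The plan is to prove the isomorphism of categories directly by unwinding the definitions of models and homomorphisms, exploiting the fact that the signatures $\Omega^\eT$ and $\Omega^{\eT'}$ are disjoint. First I would construct the two functors comprising the claimed isomorphism. On objects, a model $M$ of $\eT \sqcup \eT'$ is a metric space $M$ together with an interpretation $M[f]$ for every operation symbol $f$; since the signature is the disjoint union $\Omega^\eT \sqcup \Omega^{\eT'}$, these interpretations partition into those for symbols of $\Omega^\eT$ and those for symbols of $\Omega^{\eT'}$, yielding an $\Omega^\eT$-structure $M_1$ and an $\Omega^{\eT'}$-structure $M_2$ on the same underlying space. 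The pair $(M_1, M_2)$ is the image under the comparison functor; conversely, a compatible pair of models on a common carrier glues to a single $\Omega^\eT \sqcup \Omega^{\eT'}$-structure. The key observation making this work is that preterms over $\Omega^\eT$ and preterms over $\Omega^{\eT'}$ are disjoint subsets of the preterms over the union, so the interpretation $M[t]$ of any term depends only on the interpretations of the symbols actually appearing in $t$.

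The crucial step is then to check that these assignments respect the \emph{models-of-the-theory} condition rather than merely the models-of-the-signature condition. I would argue that $M$ satisfies every sequent of $\eT \sqcup \eT'$ if and only if $M_1$ satisfies every sequent of $\eT$ and $M_2$ satisfies every sequent of $\eT'$. Since $\eT \sqcup \eT'$ is generated by the union of the axioms of $\eT$ and $\eT'$, by the Soundness proposition it suffices to check the axioms. Each axiom of $\eT$ involves only symbols from $\Omega^\eT$, so its interpretation in $M$ coincides with its interpretation in $M_1$ (and dually for $\eT'$); hence satisfaction of the generating sequents decomposes exactly along the two theories. This also handles the well-formedness side: a formational judgement $\Gamma \vdash t \ok$ for a term $t$ built purely from $\Omega^\eT$-symbols holds in $M$ iff it holds in $M_1$, because the applicability conditions in rule $\mathbf{App}$ only constrain distances among subterms of $t$.

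For morphisms, I would verify that a nonexpansive map $\phi: M \to N$ is an $(\eT \sqcup \eT')$-homomorphism precisely when it is simultaneously an $\eT$-homomorphism $M_1 \to N_1$ and an $\eT'$-homomorphism $M_2 \to N_2$. This is immediate: the homomorphism condition $\phi(M[f](x_1,\dots)) = N[f](\phi(x_1),\dots)$ is required for every $f \in \Omega^\eT \sqcup \Omega^{\eT'}$, and the quantifier over the disjoint union factors as a conjunction of the quantifier over $\Omega^\eT$ and the quantifier over $\Omega^{\eT'}$. Together with the object-level correspondence, this shows that the two functors are mutually inverse and that both are compatible with the forgetful functors to $\Met$, establishing the pullback description $\Mod(\eT) \times_\Met \Mod(\eT')$. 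Since everything is literally an equality of structures on a shared carrier rather than merely an equivalence, the comparison is an isomorphism of categories, and it is routine to see it respects the $\Met$-enrichment on hom-objects.

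I do not expect a genuine obstacle here, as the result is essentially bookkeeping; the one point deserving care is the interaction between well-formedness and the two subtheories. Because well-formedness of a term can in principle depend on distance bounds supplied by the theory's equations, I would want to confirm that no axiom of $\eT$ can license well-formedness of a term mixing operations from both signatures in a way that is invisible to either factor. This is ruled out by the disjointness of the two signatures at the level of preterms: any term whose well-formedness is at issue uses symbols from only one of the two signatures at each application node, and the distance bounds relevant to rule $\mathbf{App}$ at that node are derivable within the corresponding single theory. Making this precise by induction on term structure is the only step that requires more than a one-line justification.
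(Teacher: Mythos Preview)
Your proposal is correct and follows the same approach as the paper, which dispatches the result in three lines as ``essentially true by definition'': the signatures are disjoint, and by Soundness it suffices to verify the generating axioms, each of which is pure in one signature. Your final paragraph's caution about mixed terms is unnecessary for exactly the reason you already gave earlier---only the axioms need checking, and those never mix symbols---so the induction you propose there is not actually required.
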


Applying the preceding proposition to the characterization of $\Mod(\eT(A))$, we obtain the following:

\begin{corollary}
	$\Mod(\eT \sqcup \eT(A)) \cong \Mod(\eT)_{/A}$.  
\end{corollary}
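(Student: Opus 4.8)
The statement follows by chaining the two preceding results, about disjoint unions and about the theory $\eT(A)$, together with a formal manipulation of fibre products over $\Met$. The plan is as follows.

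First I would apply \cref{prop4} with $\eT' = \eT(A)$, obtaining
\[
\Mod(\eT \sqcup \eT(A)) \cong \Mod(\eT) \times_\Met \Mod(\eT(A)).
\]
By the explanatory clause of \cref{prop4}, this equivalence is compatible with the forgetful functors to $\Met$: an object on the left is a single metric space carrying, independently, the structure of a model of $\eT$ and of a model of $\eT(A)$, and a morphism is a map that is simultaneously a homomorphism for both theories, which is exactly the description of the fibre product on the right. Next I would substitute the identification $\Mod(\eT(A)) = \Met_{/A}$ supplied by \cref{prop3}. Since that proposition is an identification of categories over $\Met$ (equivalently, compatible with the forgetful functors), the substitution may be carried out inside the fibre product, yielding
\[
\Mod(\eT \sqcup \eT(A)) \cong \Mod(\eT) \times_\Met \Met_{/A}.
\]

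It then remains to recognize the right-hand side as $\Mod(\eT)_{/A}$, the category of models of $\eT$ equipped with a nonexpansive map from $A$. This is a purely formal fact about slice categories. Writing $U \colon \Mod(\eT) \to \Met$ for the forgetful functor, an object of $\Mod(\eT) \times_\Met \Met_{/A}$ is a pair consisting of a model $M$ and an object $(N, A \to N)$ of $\Met_{/A}$ subject to $U(M) = N$; this is the same datum as a model $M$ together with a map $A \to U(M)$, and morphisms correspond under the same bookkeeping. In other words, pulling back the forgetful functor $U$ along the projection $\Met_{/A} \to \Met$ recovers $\Mod(\eT)_{/A}$, which completes the argument.

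I do not expect a genuine obstacle: every step reduces to an already-established proposition or to the formal behaviour of slice categories under pullback. The only point deserving attention is to keep track that all the equivalences in play respect the forgetful functors to $\Met$, so that the fibre products are formed over the correct base; this is guaranteed precisely because \cref{prop3} and \cref{prop4} are stated as identifications over $\Met$.
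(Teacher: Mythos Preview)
Your argument is correct and is exactly the approach the paper takes: the corollary is obtained by combining \cref{prop4} (specialized to $\eT' = \eT(A)$) with \cref{prop3}, and then identifying the fibre product $\Mod(\eT) \times_\Met \Met_{/A}$ with the comma category $\Mod(\eT)_{/A}$. The paper simply states this as an immediate consequence without spelling out the fibre-product manipulation, but your elaboration is precisely what is meant.
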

Since a left adjoint exists if and only if each comma category has an initial object (and is then given by these initial objects, see eg \cite[Theorem IV.1.1]{MacLane}), we obtain:
\begin{corollary}
	$\Mod(\eT) \to \Met$ admits a left adjoint, which we denote $\Free^\eT(A)$.
	Concretely, $\Free^\eT(A)$ is given by terms $t$ using the operations of $\eT$ and a further constant symbol for each element of $A$, quotiented by provable equality, with $d(t,t')$ being the smallest $\epsilon$ so that $t=_\epsilon t'$ is provable using $\eT$ and the further axioms $\vdash [a] =_{d(a,a')} [a']$ for every pair of elements in $A$.
\end{corollary}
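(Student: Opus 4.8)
The plan is to read this off from the adjoint functor existence criterion, feeding it the two preceding results. By \cite[Theorem IV.1.1]{MacLane}, the forgetful functor $U: \Mod(\eT) \to \Met$ admits a left adjoint precisely when, for every metric space $A$, the comma category $\Mod(\eT)_{A/}$ --- whose objects are pairs $(M, f: A \to U(M))$ consisting of a model $M$ of $\eT$ and a nonexpansive map $A \to M$ --- has an initial object. The left adjoint is then assembled pointwise from these initial objects: $\Free^\eT(A)$ is the underlying model of the initial object over $A$, and the unit $\eta_A: A \to \Free^\eT(A)$ is its structure map.

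The first step is to supply these initial objects. The preceding corollary identifies $\Mod(\eT)_{A/}$ with $\Mod(\eT \sqcup \eT(A))$. Since $\eT \sqcup \eT(A)$ is itself a metric equational theory, \cref{prop2} endows it with an initial model; transporting this model back along the identification yields the required initial object of $\Mod(\eT)_{A/}$, and hence the left adjoint exists.

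For the concrete description I would unfold the construction of \cref{prop2} applied to the combined theory $\eT \sqcup \eT(A)$. Its signature is $\Omega^\eT \sqcup \Omega^{\eT(A)}$, and since $\Omega^{\eT(A)}$ consists of one nullary symbol $[a]$ per point $a \in A$, the closed terms over the combined signature are exactly the terms built from the operations of $\eT$ together with one constant for each element of $A$. By \cref{prop2} these are quotiented by $t \sim t' \Leftrightarrow\ \vdash_{\eT \sqcup \eT(A)} t =_0 t'$ and metrized by $d(t,t') = \min\{\epsilon \mid \vdash_{\eT \sqcup \eT(A)} t =_\epsilon t'\}$. As the axioms of $\eT \sqcup \eT(A)$ are those of $\eT$ together with the sequents $\vdash [a] =_{d(a,a')} [a']$, such derivations are precisely the proofs ``using $\eT$ and the further axioms'' described in the statement; the unit $\eta_A$ sends $a$ to the class of the constant $[a]$, which is nonexpansive by exactly the axiom $\vdash [a] =_{d(a,a')} [a']$.

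Since existence is immediate once the preceding corollary and \cref{prop2} are available, the only genuinely load-bearing point is the identification $\Mod(\eT)_{A/} \cong \Mod(\eT \sqcup \eT(A))$ underlying the first step: one must be sure that the morphisms demanded by MacLane's criterion (commuting triangles over $A$ in $\Met$) correspond, under this equivalence, to $\eT \sqcup \eT(A)$-homomorphisms, and that the metric on the initial model produced by \cref{prop2} for the combined theory coincides with the asserted distance formula. I expect this matching of the two distance descriptions to be the main, if modest, obstacle.
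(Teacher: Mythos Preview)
Your proposal is correct and follows the paper's own argument essentially verbatim: invoke MacLane's criterion, identify the comma category with $\Mod(\eT \sqcup \eT(A))$ via the preceding corollary, and supply the initial object from \cref{prop2}. If anything, your unfolding of the concrete description and the unit is more detailed than the paper's one-line justification.
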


We now turn to the proof of monadicity ---this is really the key ingredient in the comparison with Lawvere theories.
Of course, the existence of the left adjoint $\Free^\eT$ (along with the proof that the associated monad has countable arity, see below) already gives us a Lawvere theory ---the question is whether its models are the same as the models of $\eT$.

\begin{proposition}
	The adjunction $\Mod(\eT) \to \Met$ is monadic.
\end{proposition}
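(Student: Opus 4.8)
The plan is to apply Beck's monadicity theorem to the forgetful functor $U \colon \Mod(\eT) \to \Met$. The left adjoint $\Free^\eT$ has already been constructed, so it remains to verify that (i) $U$ reflects isomorphisms and (ii) $U$ creates coequalizers of those parallel pairs whose image under $U$ admits a split coequalizer. Condition (i) is routine: if $\phi \colon M \to N$ is a homomorphism whose underlying nonexpansive map is invertible in $\Met$, then the inverse is again nonexpansive, and applying $\phi^{-1}$ to both sides of the homomorphism equation for $\phi$ shows that $\phi^{-1}$ commutes with every operation; hence $\phi^{-1}$ is a homomorphism and $\phi$ is an isomorphism of models.

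The substance is in condition (ii). Suppose $u, v \colon M \rightrightarrows N$ are homomorphisms and that $Uu, Uv$ admit a split coequalizer in $\Met$, presented by $q \colon N \to Q$, $s \colon Q \to N$, $t \colon N \to M$ with $qu = qv$, $qs = \mathrm{id}_Q$, $ut = \mathrm{id}_N$ and $vt = sq$. I would first put a model structure on $Q$ by the forced formula
$$Q[f](\xi) = q\bigl(N[f](s \circ \xi)\bigr), \qquad f \colon A \in \Omega,\ \xi \in Q^A.$$
This is the only possible choice: since $qs = \mathrm{id}_Q$, every nonexpansive $\xi \colon A \to Q$ equals $q \circ (s \circ \xi)$, so any model structure on $Q$ making $q$ a homomorphism is constrained to this value, which gives the uniqueness needed for creation. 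That $q$ is indeed a homomorphism for this structure is the key calculation, and it uses all four split identities together with the fact that $u$ and $v$ are homomorphisms: for $\eta \colon A \to N$ one rewrites $s \circ q \circ \eta = v \circ t \circ \eta$, pushes $v$ and then $u$ through the operation, and collapses using $qu = qv$ and $ut = \mathrm{id}_N$ to obtain $Q[f](q \circ \eta) = q(N[f](\eta))$.

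It then remains to check that $Q$ is a genuine model of $\eT$, and this is where the metric arities demand care. By induction on terms one shows $Q[w](\alpha) = q\bigl(N[w](s \circ \alpha)\bigr)$ for every term $w$ and assignment $\alpha \colon X_\Gamma \to Q$, the inductive step being exactly the homomorphism identity just proved. Since $q$ is nonexpansive and $s \circ \alpha$ satisfies $\Gamma$ whenever $\alpha$ does, every distance bound valid in $N$ transfers to $Q$: this simultaneously verifies the formational judgments $\Gamma \vdash w \ok$ (the domain-of-definition inequalities for the arity of $f$ can only shrink under the nonexpansive $q$) and the equational axioms $\Gamma \vdash w_1 =_\epsilon w_2$ of $\eT$. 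I expect the bookkeeping in this inductive transfer --- keeping the well-formedness judgments and the term interpretations mutually consistent in the presence of metric arities --- to be the main obstacle, although no single step is hard. Finally, because $q$ is an absolute coequalizer in $\Met$ it is preserved by $U$, and any homomorphism out of $N$ coequalizing $u, v$ factors uniquely through $q$ in $\Met$, the factorization being automatically a homomorphism since $q$ is a split epimorphism; hence $q$ is the coequalizer in $\Mod(\eT)$ and $U$ creates it. Beck's theorem then yields monadicity.
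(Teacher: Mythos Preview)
Your proof is correct and follows essentially the same route as the paper: Beck's monadicity theorem, with the model structure on $Q$ defined by $Q[f](\xi) = q(N[f](s\xi))$, the homomorphism identity for $q$ obtained from the split-coequalizer equations, and the verification that $Q$ is a model via the inductive formula $Q[w](\alpha) = q(N[w](s\alpha))$. The only cosmetic difference is that the paper invokes the variant of Beck's theorem with finite completeness as a hypothesis (noting that $U$ creates limits), whereas you check directly that $U$ reflects isomorphisms; both are standard formulations and the core argument is identical.
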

\begin{proof}
	We will apply Beck's monadicity theorem (see eg \cite[Theorem 7.1]{MacLane}.
	We must prove that the forgetful functor has a left adjoint, that both categories are finitely complete, and that
	the forgetful functor creates coequalizers for those pairs which have split coequalizers in $\Met$.
	
	The first condition we already proved, it's easy to see that the forgetful functor creates limits, taking care of the completeness.
	So let's look at the last condition.
	Let $(l,r): R \to M$ be a pair of homomorphisms in $\Mod(\eT)$.
	A split coequalizer of the underlying metric spaces of this diagram is a diagram:
	
	\begin{center}
		\begin{tikzcd}[column sep=5em]
			R \ar[r, shift left=1, "l"] \ar[r, shift right=1, swap, "r"]& M \ar[l, bend right=30, swap, "t"] \ar[r, swap, "e"] & Q \ar[l, bend right=30, swap, "s"]
		\end{tikzcd}
	\end{center}
	so that $es= 1_Q, se = rt, lt = 1_M$.
	Note that in this case,
	\[d_Q(q,q') = d_Q(es(q),es(q')) \leq d_M(s(q), s(q')) \leq d_Q(q,q'),\]
	so $d_Q(q,q') = d_M(s(q),s(q'))$, and $e$ is surjective with $e(m) = e(m')$ if and only if
	$rt(m) = rt(m')$.
	This implies that $Q$ as a set is the coequalizer of $l,r: R \to M$,
	and for each point $m \in M$, we can pick out the pair $t(m) \in R$ which identifies $m$ and $s(e(m))$.
	
	We must show that we can equip $Q$ with the structure of an $\eT$-model, so that $e$ becomes a homomorphism and so that $Q$ acquires the universal property of coequalizing $(l,r)$.
	
	First, let's define $Q[f]$ for each operation symbol $f \in \eT$.
	Since $s$ is distance nonexpanding, we can certainly define $Q[f](q_1, \dots)$ to be $e(M[f](s(q_1), \dots))$.
	
	Now, suppose $\Gamma \vdash t =_\epsilon s$ is an equation in $\eT$.
	Fix a variable assignment $\alpha: X \to Q$ satisfying $\Gamma$.
	Then postcomposing with $s$ gives an assignment in $M$ which also satisfies $\Gamma$ (since this is nonexpansive).
	It's apparent by structural induction on $t$ that $Q[t](\alpha)$, the value of $t$ under $\alpha$, equals $e(M[t](s\alpha)$. Since $M$ is a model and $e$ is nonexpansive, the equation also holds for $Q$. Hence, $Q$ is a model.
	
	Now, if $(m_1, \dots)$ is an element of $M^N$ (where $N$ is the arity of $f$),
	$(t(m_1), t(m_2), \dots)$ gives an element of $R^N$, so that $R[f](t(m_1), \dots)$ must identify $M[f](m_1, \dots)$ and $M[f](se(m_1), \dots)$.
	This means $Q[f](e(m_1), \dots) = e(M[f](se(m_1), \dots)) = e(M[f](m_1, \dots))$ (since $e$ equalizes $l$ and $r$), so that $e$ is a homomorphism for the $\eT$-structures on $M$ and $Q$.

	Now suppose given a homomorphism $\phi: M \to A$ which equalizes $l$ and $r$, for some other model $A$.
	We may attempt to define $Q \to A$ as the composite $Q \overset{s}{\to} M \to A$.
	This will be nonexpansive, and it's straightforward to see that it'll be a homomorphism.
	On the other hand, if $\hat{\phi}: Q \to A$ is any homomorphism so that $\hat{\phi} e = \phi$, we have
	$\hat{\phi} = \hat{\phi}  e r = \phi r$, so this is in fact the unique such morphism.
	This concludes the proof of monadicity.
\end{proof}
\begin{proposition}
	The functor $\Free^\eT: \Met \to \Met$ is countable-arity, i.e it commutes with countably filtered colimits.
\end{proposition}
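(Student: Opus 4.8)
The plan is to reduce the statement to the claim that the forgetful functor $U\colon \Mod(\eT)\to\Met$ preserves $\aleph_1$-filtered colimits. Indeed, $\Free^\eT$ is the composite of $U$ with the left adjoint $F\colon\Met\to\Mod(\eT)$ constructed above, and $F$, being a left adjoint, preserves all colimits; so for an $\aleph_1$-filtered diagram $E$ in $\Met$ we have $\Free^\eT(\colim_j E_j)=U F(\colim_j E_j)=U(\colim_j F E_j)$, and it suffices to know that $U$ preserves the $\aleph_1$-filtered colimit $\colim_j F E_j$. I will in fact show that $U$ \emph{creates} $\aleph_1$-filtered colimits. Concretely, take an $\aleph_1$-filtered diagram $D\colon\cat{J}\to\Mod(\eT)$, form the colimit $C=\colim_j U D(j)$ of the underlying spaces in $\Met$, and show that $C$ carries a unique model structure making the colimit cocone legs $\lambda_j\colon U D(j)\to C$ into homomorphisms and exhibiting $C$ as $\colim_j D(j)$ in $\Mod(\eT)$. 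The engine of the whole argument is the fact recorded above that the countable metric spaces are exactly the $\aleph_1$-presentable objects of $\Met$: any nonexpansive map out of a countable space into $C$ factors through one of the stages $U D(j)$, and any two such factorizations are identified at a later stage.

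The main step is to equip $C$ with operations. For $f\colon N\in\Omega$, an element of $C^N$ is a nonexpansive map $N\to C$; since $N$ is countable, hence $\aleph_1$-presentable, it factors as $N\to U D(j)\to C$ through some stage, and I define the value of $C[f]$ on it by applying $D(j)[f]$ to the map $N\to U D(j)$ and transporting the result along $\lambda_j$. This is well-defined and nonexpansive because any two stages can be merged into a common one by $\aleph_1$-filteredness, along which the two candidate values agree. Crucially, the metric-arity side condition that made well-formedness delicate in the syntax disappears here: an element of $C^N$ is \emph{by definition} a nonexpansive map, so it automatically satisfies $d_C(c_i,c_l)\le d_N(i,l)$, and likewise its lift to $U D(j)$ is nonexpansive, so $D(j)[f]$ is genuinely defined on it. By construction each $\lambda_j$ is then a homomorphism, and the same factorization argument shows that $C$ satisfies each axiom of $\eT$ (whence, by the Soundness proposition, all of $\eT$): a variable assignment $X_\Gamma\to C$ from a countable context factors through some $U D(j)$, where the axiom holds, and the required upper bound on distances is preserved by the nonexpansive map $\lambda_j$, so it holds in $C$ as well. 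The $\ok$-judgments are handled identically, since a nonexpansive map only decreases distances. Hence $C\in\Mod(\eT)$.

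Finally I will verify the universal property. Given any model $M$ and a cocone of homomorphisms $D(j)\to M$, the underlying cocone induces a unique nonexpansive map $C\to M$ by the universal property of $C$ in $\Met$; that this map is a homomorphism is checked operation by operation, once more by factoring each tuple through a stage where the corresponding leg is already a homomorphism. Thus $C=\colim_j D(j)$ in $\Mod(\eT)$ and $U$ preserves it, which is exactly the assertion that $\Free^\eT=UF$ commutes with $\aleph_1$-filtered colimits. The one genuinely load-bearing point---the step I expect to require the most care---is showing that the underlying space $C$ supports the operations and validates the equations, i.e.\ that countable-arity operations and countable-variable equations only ever ``see'' a single stage of the diagram. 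This is precisely what $\aleph_1$-presentability of the spaces $N$ and $X_\Gamma$ buys us, and it is also where it becomes essential that the colimit is countably (rather than merely finitely) filtered: a countable arity forces us to assemble countably many choices coherently, which a finitely filtered colimit could not guarantee.
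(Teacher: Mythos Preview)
Your argument is correct and takes a genuinely different route from the paper's. The paper works directly with the canonical comparison map $\colim_i \Free^\eT(X_i) \to \Free^\eT(\colim_i X_i)$: it equips the left-hand side with a model structure (using exactly the fact you isolate, that countable powers commute with $\aleph_1$-filtered colimits), verifies that this is a model of $\eT$, then produces an inverse via the universal property of $\Free^\eT(\colim_i X_i)$ and finishes with an explicit term-level surjectivity check. You instead factor $\Free^\eT = U \circ F$ and prove the stronger, independently useful statement that $U$ \emph{creates} $\aleph_1$-filtered colimits; the term-by-term surjectivity argument disappears, replaced by the universal property of the colimit in $\Mod(\eT)$. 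Both proofs rest on the same load-bearing ingredient---$\aleph_1$-presentability of the countable arities $N$ and the countable spaces $X_\Gamma$---so the essential content is shared. Your route is more structural and yields a reusable lemma about $U$; the paper's route is more hands-on and stays closer to the syntactic description of $\Free^\eT$ developed earlier in the section.
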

\begin{proof}
	Let $(X_i)_{i \in I}$ be a diagram in $\Met$ over a countably filtered index category $I$.
	We must prove
	\[\colim_i \Free^\eT(X_i) \to \Free^\eT(\colim_i X_i)\]
	is an isomorphism.
	Let $f: A = (A,d)$ be an operation symbol in $\eT$.
	Note that countable powers commute with countably filtered colimits in $\Met$ (this is essentially what it means that the countable metric spaces are the countably small objects in $\Met$),
	so
	\[(\colim_i \Free^\eT(X_i))^A \cong \colim_i \Free^\eT(X_i)^A\]
	Hence, we can equip the colimit with the structure of a model of the signature of $\eT$,
	by defining the interpretation of $f$ to be the map
	\[(\colim_i \Free^\eT(X_i))^A \cong \colim_i \Free^\eT(X_i)^A \to \colim_i \Free^\eT(X_i),\]
	where the last map just applies $f$ in each part of the colimit.
	This amounts to, given a sequence $t_a$, finding $X_i$ so that they're all present in $\Free^\eT(X_i)$ (and at the right distance) and just taking $f((t_a))$.
	
	Given an equation $\Gamma \vdash t =_\epsilon s$ in $\eT$, let $\alpha: X \to \colim_i \Free^\eT(X_i)$ be some assignment validating $\Gamma$.
	Then there is some $i$ so that this factors as a map $X \to \Free^\eT(X_i)$ which also validates $\Gamma$.
	Up to equivalence, the interpretation of $t$ and $s$ under this assignment in the colimit is just their interpretation under this assignment in $\Free^\eT(X_i)$.
	Since this free model satisfies the equation, and the inclusion of this in the colimit is nonexpansive, the colimit also satisfies the equation. Hence, it's a model of $\eT$.
	
	The maps $X_i \to \Free^\eT(X_i)$ induce a map \[\colim_i X_i \to \colim_i \Free^\eT(X_i),\] which, since the latter is a model, induces a map
	\[\phi: \Free^\eT(\colim_i X_i) \to \colim_i \Free^\eT(X_i)\]
	By initiality, the composite 
	\[\Free^\eT(\colim_i X_i) \to \colim_i \Free^\eT(X_i) \to \Free^\eT(\colim_i X_i)\]
	is the identity, proving $\phi$ is an isometry.
	To see it's surjective, consider some element in $\colim_i \Free^\eT(X_i)$.
	It's represented by some $t \in \Free^\eT(X_i)$ for some $i$,
	i.e some term using the operations of $\eT$ and constant symbols from $X_i$.
	Let $t'$ be a term obtained from $t$ by replacing each $[x]$ with $[\bar{x}]$, where $\bar{x} \in \colim_i X_i$ is the equivalence class of $x$.
	Then up to the equivalence relation in the colimit, $\phi(t') = t$.
\end{proof}
\begin{corollary}\label{cor:METisLaw}
	For any equational theory $\eT$, there is a $\Met$-Lawvere theory with an equivalent category of models.
\end{corollary}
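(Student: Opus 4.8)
The plan is to route the whole comparison through the monad attached to $\eT$ and then invoke Power's equivalence, so that this corollary becomes the formal assembly of the preceding propositions rather than new work. Writing $T$ for the monad induced by the adjunction $\Free^\eT \dashv U$, with $U\colon \Mod(\eT) \to \Met$ the forgetful functor, the two immediately preceding propositions give everything needed on the MET side: this adjunction is monadic, and the underlying endofunctor of $T$ --- namely $\Free^\eT\colon \Met \to \Met$ --- preserves countably filtered colimits. Since every $\Met$-endofunctor is automatically enriched (strong), $T$ is a countable-arity strong monad on $\Met$, i.e.\ an object of $\Mnd(\CMet)$, and monadicity supplies an equivalence $\Mod(\eT) \simeq \Met^{T}$ over $\Met$, where $\Met^{T}$ denotes the Eilenberg--Moore category.

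Next I would feed $T$ into the equivalence $\Law \simeq \Mnd(\CMet)$ recorded in the proposition taken from \cite{power1999}. Its inverse sends $T$ to a concrete $\Met$-Lawvere theory $\cat{L}$, namely the dual of the Kleisli category of $T$ restricted to the countable metric spaces. Because this assignment is one half of an equivalence, applying the forward direction to $\cat{L}$ recovers $T$ up to isomorphism; and the forward direction is by definition the monad of the monadic adjunction $\Mod(\cat{L}) \to \Met$. Hence $\Mod(\cat{L})$ is monadic over $\Met$ with monad isomorphic to $T$, yielding a second equivalence $\Mod(\cat{L}) \simeq \Met^{T}$ over $\Met$.

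Composing the two equivalences gives
\[\Mod(\cat{L}) \simeq \Met^{T} \simeq \Mod(\eT),\]
and since each is an equivalence \emph{over} $\Met$ (commuting with the relevant forgetful functor), so is the composite; this is exactly the asserted equivalence of categories of models.

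I expect no genuine mathematical obstacle at this stage: the substance lives in the already-established monadicity of $\Mod(\eT) \to \Met$, the countable arity of $\Free^\eT$, and Power's classification of $\Met$-Lawvere theories by countable-arity monads. The only points that deserve care are bookkeeping ones --- confirming that $T$ lands in the domain of Power's equivalence (immediate, since being strong is a property automatically enjoyed by every $\Met$-endofunctor) and checking that the monad isomorphism produced by the equivalence respects the $\Met$-enrichment, so that the final equivalence $\Mod(\cat{L}) \simeq \Mod(\eT)$ is one of $\Met$-enriched categories over $\Met$ and not merely of their underlying ordinary categories.
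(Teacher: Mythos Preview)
Your proposal is correct and matches the paper's intended argument: the corollary is stated there without proof, as an immediate consequence of the two preceding propositions (monadicity and countable arity of $\Free^\eT$) together with Power's equivalence between $\Met$-Lawvere theories and countable-arity strong monads.

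One caveat worth fixing: the clause ``every $\Met$-endofunctor is automatically enriched (strong)'' is false as written. For instance, the endofunctor that rescales every metric by a fixed factor $\lambda>1$ is an ordinary endofunctor of $\Met$ but is not nonexpansive on hom-spaces, hence not enriched. What the paper says is only that being enriched is a \emph{property} of an ordinary functor rather than extra structure --- not that every functor enjoys it. So you should instead justify that \emph{this particular} monad is enriched: e.g.\ observe that $\Mod(\eT)$ has cotensors (for a model $M$ and $A\in\Met$, the power $M^A$ with pointwise operations is again a model) and that $U$ preserves them, whence its ordinary left adjoint is automatically a $\Met$-left adjoint; or check directly that the adjunction bijection $\Mod(\eT)(\Free^\eT A,M)\cong \Met(A,UM)$ is an isometry. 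The paper itself glosses over this point, so your level of detail is already at least what the paper provides.
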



\section{Equational theories from Lawvere theories}\label{sec:LawisMET}
	Let $\lT$ be a $\Met$ Lawvere theory.
	Fix a choice of distinct variables $x_1, x_2, \dots$.
	Given a metric cardinal\footnote{A metric cardinal is a cardinal endowed with a metric; hence we might have different metric cardinals supported by the same set.} $C$, we let $\Gamma(C)$ be the context $\{x_i =_{d_C(i,j)} x_j \mid i,j \in C\}$.
	Then we can define a theory $\eT(\lT)$, as follows:
	\begin{enumerate}
		\item The signature $\Omega^\lT$ has a symbol $[f]: C$ for every countable metric cardinal $C$ and $f \in \lT(x^C,x)$.
		\item Given $f, f' \in \lT(x^C,x)$, there is an axiom $\Gamma(C) \vdash [f](x_1, \dots) =_{d_{\lT(C)}(f,f')} [f'](x_1, \dots).$ 
		\item For $i \in C$, let $\pi_i: x^C \to x$ be the projection to the $i$th component. Then we have an axiom
		$$\Gamma(C) \vdash [\pi_i](x_1, \dots) =_0 x_i$$
		\item Given $f: x^C \to x$ and a tuple $(g_1, \dots) \in \lT(x^D,x)^C$, we can compose these using the isomorphism $\lT(x^D,x)^C = \lT(x^D,x^C)$. Denote this composition as $f \circ (g_1, \dots)$.
		Then we have an axiom $$\Gamma(D) \vdash [f]([g_1](x_1, \dots),[g_2](x_1,\dots), \dots) = [f \circ (g_1, \dots)](x_1, \dots)$$
	\end{enumerate}

\begin{proposition}\label{prop:LawisMET}
	$\Mod(\eT(\lT)) \cong \Mod(\lT)$ as categories over $\Met$.
	In particular, a strong monad on $\Met$ comes from a metric equational theory if and only if it is of countable arity.
\end{proposition}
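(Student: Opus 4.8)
The plan is to establish the equivalence $\Mod(\eT(\lT)) \cong \Mod(\lT)$ directly, by observing that the two categories are built from literally the same data, and then to obtain the monad characterization by combining this equivalence with the results of \cref{sec:monad} and the equivalence between $\Met$ Lawvere theories and countable-arity monads.

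First I would unwind the definition of a model of $\eT(\lT)$. By the construction of the signature $\Omega^\lT$, such a model is a metric space $A$ together with a nonexpansive map $A[f]\colon A^C \to A$ for every countable metric cardinal $C$ and every $f \in \lT(x^C, x)$, subject to the three families of axioms. Since an assignment satisfying the context $\Gamma(C)$ is exactly a nonexpansive map $C \to A$, i.e.\ an element of the power $A^C$, the axiom of item (2) says precisely that $f \mapsto A[f]$ is nonexpansive as a map $\lT(x^C,x) \to [A^C, A]$; the axiom of item (3) says each $A[\pi_i]$ is the $i$th projection; and the axiom of item (4) says that $A[-]$ carries the composite $f \circ (g_1,\dots)$ in $\lT$ to the corresponding composite of interpretations. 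I would then show these are exactly the conditions needed for the data $(A, (A[f])_f)$ to assemble into a power-preserving $\Met$-functor $M_A\colon \lT \to \Met$ with $M_A(x) = A$ and $M_A(f) = A[f]$. Conversely, any model $M$ of $\lT$ restricts, via the power isomorphism $M(x^C) \cong M(x)^C$, to exactly such a family $f \mapsto M(f)$ on $M(x)$, giving a model $A_M$ of $\eT(\lT)$. These two assignments are visibly mutually inverse on objects, and since both forgetful functors return the same underlying space ($M(x)$, respectively $A$), the equivalence will be over $\Met$.

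The step I expect to be the main obstacle is verifying that $M_A$ is a well-defined enriched, power-preserving functor. Here the key point is that, by the defining universal property of powers, every morphism $x^C \to x^D$ in $\lT$ is a $D$-indexed family of elements of $\lT(x^C, x)$, so $M_A$ is forced on all of $\lT$ once it is specified on the spaces of operations $\lT(x^C,x)$. I must then check: that identities are preserved --- the identity of $x^C$ is the tuple of projections $(\pi_i)_{i\in C}$, which item (3) sends to the identity of $A^C$; that composition is preserved --- this is precisely item (4); that $M_A$ is enriched --- this is item (2); and that $M_A$ preserves powers by countable objects --- immediate from $M_A(x^C) = A^C$ together with the projections of item (3). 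The reverse checks, that the family extracted from a genuine $\lT$-model satisfies all three axioms, follow dually from functoriality, enrichment, and power-preservation of $M$. On morphisms, a natural transformation between $\lT$-models is determined by its component at $x$ (again by power-preservation), and the naturality squares at the operations $f$ are exactly the homomorphism conditions for the corresponding $\eT(\lT)$-models, so the correspondence extends to an isomorphism of hom-sets, yielding the claimed equivalence.

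For the concluding biconditional I would reason as follows. If a strong monad $T$ comes from a MET $\eT$ --- that is, $T$ is the monad of the monadic adjunction $\Mod(\eT) \to \Met$ of \cref{sec:monad} --- then the proposition of \cref{sec:monad} showing $\Free^\eT$ preserves countably filtered colimits gives that $T$ is of countable arity. Conversely, if $T$ is a countable-arity strong monad, the equivalence between $\Law$ and countable-arity monads supplies a $\Met$ Lawvere theory $\lT$ whose induced monad is $T$ (so that $\Mod(\lT)$ is, by monadicity, the Eilenberg--Moore category of $T$ over $\Met$); by the first part of this proposition $\Mod(\eT(\lT)) \cong \Mod(\lT)$ over $\Met$, and since both categories are monadic over $\Met$ this equivalence transports the monad, so the monad induced by the MET $\eT(\lT)$ is again $T$. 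Hence $T$ comes from a metric equational theory exactly when it is of countable arity.
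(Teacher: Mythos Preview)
Your proposal is correct and follows essentially the same approach as the paper: construct the correspondence between $\lT$-models and $\eT(\lT)$-models by matching the axioms (2), (3), (4) against enrichment, power-preservation, and functoriality respectively, and check that morphisms agree because a natural transformation is determined by its component at $x$. Your treatment of the ``in particular'' clause is in fact more explicit than the paper's, which leaves that deduction to the reader; your appeal to the countable-arity result of \cref{sec:monad} for one direction and to the $\Law$--monad equivalence plus transport along the monadic equivalence for the other is exactly what is intended.
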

\begin{proof}
	Let $M$ be a model of $\lT$. Then we can equip $M(x)$ with the structure of a model of $\eT(\lT)$ as follows:
	For each operation $f \in \lT(x^C,x)$, we define $M(x)[f] : M(x)^C \to M(x)$ simply as $M(f)$, using the canonical isomorphism $M(x)^C \cong M(x^C)$.
	Since $M$ is assumed to be an enriched functor, the map $\lT(x^C,x) \to \Met(M(x)^C,M(x))$ is nonexpansive, and hence this structure satisfies the axioms from part 2 of the definition.
	Since $M$ preserves powers, it carries the projections to the projections, so it satisfies part 3.
	And since $M$ is functorial, it preserves the composition, which means it satisfies part 4.
	
	Hence, this is a model of $\eT(\lT)$.
	If $\phi: M \to N$ is a natural transformation, the induced map $\phi_{x^C}: M(x)^C \to N(x)^C$ is given by the $C$-fold power of $\phi_x$.
	Since $\phi$ is natural, this implies that $\phi_x$ is a $\eT(\lT)$-homomorphism, so this construction defines a functor, which we want to show is an isomorphism.
	Since $\Mod(\lT)$ is equivalently the Eilenberg-Moore category of the associated monad, the forgetful functor $\Mod(\lT) \to \Met$ is an isometry on morphisms. This implies that the functor $\Mod(\lT) \to \Mod(\eT)$ we've just constructed is an isometry on morphisms as well, so we have to show that it's full and essentially surjective.
	For fullness, it's clear that being a homomorphism between $M(x)$ and $N(x)$ requires commuting with all the operations in $\lT$, which is just what it means to be a natural transformation.
	For essential surjectivity, let $M$ be some model of $\eT(\lT)$.
	Define $M(x^C) = M^C$, and given a family $(f_a \in \lT(x^C, x))_{a \in A}$ representing a morphism $f: x^C \to x^A$, let $M(f)(m_1, \dots) = (M[f_a](m_1, \dots))_{a \in A}$.
	By axiom 2., if these operations satisfy the distance bounds $d(f_a, f_{a'}) \leq d(a,a')$, then so will the resulting map $M(x)^C \to M(x)^A$,
	and by axioms 3. and 4., this is functorial and preserves powers, so it defines a model that goes to $M$.
\end{proof}

\section{Completeness}\label{sec:Completeness}
In this section, we will prove the following completeness theorem for our theory.

\begin{theorem}\label{thm1}
	Let $\eT$ be a theory, and let $\Gamma \vdash \phi$ be a sequent.
	Suppose every model of $\eT$ satisfies this sequent. Then $\Gamma \vdash_\eT \phi$.
\end{theorem}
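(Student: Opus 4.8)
The plan is to prove completeness by constructing a single \emph{generic} (syntactic) model $M_\Gamma$ attached to the context $\Gamma$, one that satisfies a sequent exactly when that sequent is derivable; instantiating the hypothesis at $M_\Gamma$ then yields derivability. Concretely, I would take $M_\Gamma$ to be the free model $\Free^\eT(X_\Gamma)$ of \cref{sec:monad}, where $X_\Gamma$ is the metric reflection of the context pseudometric $d(x,x')=\min\{\e \mid \Gamma \vdash x =_\e x'\}$. Unwinding the construction, the points of $M_\Gamma$ are the well-formed terms $t$ (those with $\Gamma \vdash_\eT t \ok$) modulo $\Gamma \vdash_\eT t =_0 t'$, with $d([t],[t']) = \min\{\e \mid \Gamma \vdash_\eT t =_\e t'\}$, the minimum being attained by $\mathbf{Cont}$. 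By the results of \cref{sec:monad} (or by a direct check using $\mathbf{App}, \mathbf{Nexp}, \mathbf{Subst}$ exactly as in the proof of \cref{prop2}) this is a genuine model of $\eT$, and the assignment $\alpha_0 \colon x \mapsto [x]$ is nonexpansive and satisfies $\Gamma$.

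The heart of the argument is a \emph{truth lemma} proved by structural induction on a preterm $t$: first, $M_\Gamma$ satisfies $\Gamma \vdash t \ok$ if and only if $\Gamma \vdash_\eT t \ok$, and in that case the interpretation is $M_\Gamma[t](\alpha_0) = [t]$; second, for well-formed $s,t$, $M_\Gamma$ satisfies $\Gamma \vdash s =_\e t$ if and only if $\Gamma \vdash_\eT s =_\e t$. The variable case is immediate. For the operation case I would compare the recursive clause defining satisfaction of $f((t_i)) \ok$ --- namely that each $t_i$ is interpretable and $d(M_\Gamma[t_i],M_\Gamma[t_j]) \le d_f(i,j)$ --- with the premises of the rule $\mathbf{App}$. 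The $(\Leftarrow)$ directions follow from Soundness, since $M_\Gamma$ is a model and therefore satisfies every derivable sequent. For the $(\Rightarrow)$ directions one reads each semantic distance bound $d([t_i],[t_j]) \le d_f(i,j)$ back as a derivable equality $\Gamma \vdash_\eT t_i =_{d_f(i,j)} t_j$ (using that the defining minimum is attained, together with $\mathbf{Max}$) and then applies $\mathbf{App}$ to conclude $\Gamma \vdash_\eT f((t_i)) \ok$; the equality clause is obtained the same way from $d([s],[t]) \le \e$.

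With the truth lemma in hand the theorem is immediate: if every model of $\eT$ satisfies $\Gamma \vdash \phi$, then in particular $M_\Gamma$ does, whence $\Gamma \vdash_\eT \phi$ by the lemma, covering uniformly both the formational case $\phi = (t \ok)$ and the equational case $\phi = (s =_\e t)$.

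I expect the main obstacle to be the truth lemma, and within it two points deserve care. The first is reconciling the two bookkeeping conventions for hypotheses: the free-model construction of \cref{sec:monad} presents $X_\Gamma$ via \emph{nullary constants} $[a]$ with axioms $\vdash [a] =_{d(a,a')} [a']$, whereas the sequent $\Gamma \vdash \phi$ uses \emph{variables} constrained by $\mathbf{Assum}$; I must verify that provability with these constants coincides with provability in context $\Gamma$, which amounts to substituting a representative variable for each constant, replacing the empty context by $\Gamma$, and splicing in derivations of the facts $\Gamma \vdash_\eT x_a =_{d(a,a')} x_{a'}$ (each derivable since the defining minimum is attained) in place of the constant axioms, with $\mathbf{Subst}$ handling the instantiation of the non-closed axioms of $\eT$. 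The second, and genuinely new relative to the classical and QET settings, is the $(\Rightarrow)$ direction for well-formedness: here semantic well-definedness is a statement purely about distances between interpretations, and turning it back into an application of $\mathbf{App}$ is exactly the place where the metric arities --- and the entanglement of term formation with equational reasoning --- must be handled.
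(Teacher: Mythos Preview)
Your proposal is correct and follows essentially the same route as the paper: both use the free model $\Free^\eT(X_\Gamma)$ together with the tautological assignment $x \mapsto [x]$, and both reduce the argument to translating between the constant-based description of the free model and provability in context $\Gamma$. The paper isolates that translation as a separate lemma (your ``first obstacle'' is exactly \cref{lemma1}), proved by induction on the inference rules rather than by structural induction on terms, but the content is the same.
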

We will need the following characterization of $\eT \sqcup \eT(A)$ for $A$ a countable metric space:

\begin{lemma}\label{lemma1}
	Let $A$ be any metric space.
	Consider a sequent $\Gamma \vdash \phi$ in the signature of $\eT \sqcup \eT(A)$.
	We form the sequent $\Gamma^A \vdash \phi^A$ in the signature of $\eT$ by the following procedure:
	\begin{enumerate}
		\item If necessary, relabel some of the variables so that there is an infinite set of unused variables.
		\item Since only countably many of the constant symbols $[a],~ a \in A$ can occur in $\phi$, choose a distinct fresh variable $x_a$ for each of these. Let $\phi^A$ be $\phi$ with each occurrence of $[a]$ replaced by $x_a$.
		\item Let $\Gamma^A$ be $\Gamma \cup \{x_a =_{d_A(a,a')} x_{a'} \mid a,a' \in A, d(a,a') < \infty\}$. 
	\end{enumerate}
	Note that some arbitrary choices are involved in defining $\Gamma^A$ and $\phi^A$.
	Regardless of the choices, $\Gamma^A \vdash_\eT \phi^A$ if and only if $\Gamma \vdash_{\eT \sqcup \eT(A)} \phi$.
\end{lemma}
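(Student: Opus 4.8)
The plan is to establish the two implications by translating derivations, using the key observation that contexts only ever contain variables, never the constants $[a]$, so that the renaming $[a] \mapsto x_a$ acts only on terms and judgments and leaves the shape of every context intact. Throughout I assume, as permitted by step 1 of the construction, that the fresh variables $x_a$ are chosen disjoint from every variable occurring in the derivations at hand.

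For the direction $\Gamma^A \vdash_\eT \phi^A \Rightarrow \Gamma \vdash_{\eT \sqcup \eT(A)} \phi$, I would use a single application of $\mathbf{Subst}$. Substitute each fresh variable $x_a$ by the constant $[a]$, and each ``real'' variable of $\Gamma$ by itself. The pairwise-distance context required by $\mathbf{Subst}$ for this substitution is exactly $\Gamma^A$ (the real--real constraints are those of $\Gamma$, the fresh--fresh constraints are the added $x_a =_{d(a,a')} x_{a'}$, and the absent real--fresh pairs are read as the omitted $\infty$-bounds), so the second premise of $\mathbf{Subst}$ is precisely the hypothesis $\Gamma^A \vdash_\eT \phi^A$. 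The first premises are supplied by $\mathbf{Assum}$ for the real variables and by the defining axioms $\vdash [a] =_{d(a,a')} [a']$ of $\eT(A)$ for the fresh ones. Since $\phi^A[\,[a]/x_a\,] = \phi$, the conclusion is $\Gamma \vdash_{\eT \sqcup \eT(A)} \phi$. For a formational goal $\phi = (t \ok)$, where $\mathbf{Subst}$ does not directly apply, I would instead induct on $t$ using $\mathbf{App}$, feeding it the distance premises already obtained from the equational case together with the constant axioms.

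For the converse $\Gamma \vdash_{\eT \sqcup \eT(A)} \phi \Rightarrow \Gamma^A \vdash_\eT \phi^A$, I would induct on the given derivation, applying $[a] \mapsto x_a$ to every judgment. Every rule except the constant axiom preserves its context and commutes with this renaming; the constant axiom $\vdash [a] =_{d(a,a')} [a']$ becomes $x_a =_{d(a,a')} x_{a'}$, which is discharged by $\mathbf{Assum}$ once that equation is placed in the context. The one delicate rule is $\mathbf{Subst}$, whose second premise carries its own context of pairwise constraints; there I would enlarge the substitution by the identity on the relevant fresh variables, so that their constraints are carried along and remain available via $\mathbf{Assum}$. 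The result of this induction is a derivation of $\Gamma \cup C_B \vdash_\eT \phi^A$, where $C_B$ collects the constraints $x_b =_{d(b,b')} x_{b'}$ for \emph{all} constants $b$ occurring anywhere in the derivation, not merely those in $\phi$.

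The main obstacle is therefore the final reconciliation: eliminating the constraints coming from constants used in the derivation but absent from $\phi$ and $\Gamma$, so as to descend from $\Gamma \cup C_B$ to $\Gamma^A = \Gamma \cup C_{B_0}$. This is exactly where the hypothesis that $A$ is a genuine metric space becomes essential. I would argue it in two steps. First, $\vdash_\eT$ is monotone under enlarging the context (an immediate consequence of $\mathbf{Subst}$ with the identity substitution, which yields weakening), giving the easy inclusion. Second, and this is the crux, provability of a sequent from a context depends on that context only through the pseudometric it induces on the variables; and because the extra variables $x_b$ carry distances drawn from the metric $A$, the triangle inequality guarantees that routing through them never shortens the distance between any two named variables. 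Hence $C_B$ and $C_{B_0}$ induce the same pseudometric on the variables occurring in $\Gamma^A$ and $\phi^A$, and $\Gamma \cup C_B \vdash_\eT \phi^A$ descends to $\Gamma^A \vdash_\eT \phi^A$. Establishing this pseudometric-invariance of $\vdash_\eT$ cleanly---without appealing to the free-model description, which would be circular here since that description is itself phrased via constants---is the part I expect to require the most care.
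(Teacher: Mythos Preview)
Your forward direction and the skeleton of the backward induction match the paper exactly. The forward implication is the paper's argument: one use of $\mathbf{Subst}$, with $\mathbf{Assum}$ supplying the premises for the original variables and the axioms of $\eT(A)$ supplying those for the fresh ones. For the converse the paper also checks closure under the rules, and its treatment of $\mathbf{Subst}$ is precisely your ``extend the substitution by the identity on the fresh variables''.

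The divergence is in how you read $\Gamma^A$. You take it to contain constraints $x_a=_{d(a,a')}x_{a'}$ only for the constants $a$ appearing in $\phi$; but step~3 of the construction quantifies over \emph{all} $a,a'\in A$, so $\Gamma^A$ depends on $\Gamma$ and $A$ alone, not on $\phi$. This is not a cosmetic point: under your narrow reading the statement is actually \emph{false}. Take $\eT$ with no operations and the single axiom $\{u=_1 v\}\vdash u=_0 v$, and $A=\{a_1,a_2,b\}$ with $d(a_1,a_2)=2$, $d(a_i,b)=1$. Then $\vdash_{\eT\sqcup\eT(A)}[a_1]=_0[a_2]$ holds (route through $[b]$ and apply the axiom twice), yet your $\Gamma^A=\{x_{a_1}=_2 x_{a_2}\}$ cannot prove $x_{a_1}=_0 x_{a_2}$, since the two-point space at distance $2$ is a model of $\eT$ witnessing failure. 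So the pseudometric-invariance lemma you are reaching for cannot hold in the form you need, and the ``final reconciliation'' is not a lacuna to be filled but a signal that the reading of $\Gamma^A$ must change.

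With the paper's global reading the obstacle simply vanishes: $\Gamma^A$ is the same for every premise and every conclusion sharing the context $\Gamma$, so each inference rule in $\eT\sqcup\eT(A)$ is mirrored by the same rule in $\eT$ with no leftover constants to eliminate (this is why the paper can dismiss $\mathbf{Triang}$ and the like as ``analogous''). The only residual wrinkle is uncountable $A$, where one cannot literally assign fresh variables to all of $A$; but any single derivation mentions only countably many constants, and in the paper's sole application $A=X_\Gamma$ is countable anyway.
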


With this lemma, we can prove the theorem.

\begin{proof}[Proof of Theorem \ref{thm1}]
	Consider the tautological variable assignment, given by the identity function $X \to X_\Gamma$.
	By composing with the unit, we get a variable assignment $X \to X_\Gamma \to \Free^\eT(X_\Gamma)$.
	Note that by construction, this satisfies the hypotheses of $\Gamma$.
	Hence, by assumption, it must satisfy $\phi$.
	But this means that $\vdash_{\eT \sqcup \eT(X_\Gamma)} \phi[[x]/x]$, which by Lemma \ref{lemma1}, means that
	$\Gamma(X_\Gamma) \vdash_\eT \phi$, implying $\Gamma \vdash_\eT \phi$.
\end{proof}


\section{Cauchy completion}\label{sec:Cauchy}
\label{sec:CC}
Many important metric spaces are given as completions of more simply defined subspaces.
For example, the reals are the completion of the rationals, the $L^p(X,\mu)$ spaces are the completion of the continuous functions on $X$ (in the $L^p$-metric), and so on.
Similarly, in \cite{raduQAR2016}, it is proven that the space of probability measures on $X$ is constructed as the completion of $\Free^\eT(X)$ for a certain theory $\eT$, and the space of closed subsets is the completion of the free semilattice on $X$ (assuming $X$ is compact).

Completion already interacts well with QETs, because the operation of completion preserves (finite) products.
This means that if $\eT$ is a theory (with finitary operations) and $M$ is a model, the completion $\overline{M}$ has a canonical model structure making $M \to \overline{M}$ a homomorphism, and this is universal among homomorphisms from $M$ to complete models.
So $\overline{\Free^\eT(X)}$ has a good universal property.

However, with our expanded theory, we can do even better.
The monad $\overline{(-)}: \Met \to \Met$, which carries a metric space to its completion, is of countable arity, and so it is represented by a metric equational theory.
There are obviously many distinct axiomatizations of this monad. Here is an example:

	Let $N$ be the natural numbers equipped with the metric $d(n,m) = \frac{1}{2^{\min(n,m)}}$ for $n \neq m$, and $0$ if they're equal.
	Let $\eT^\mathrm{comp}$ be a theory with one operation $\lim:N$,
	and the axioms
	$$\{x_n =_{d(n,m)} x_m \mid n,m \in N\} \vdash \lim(x_1, \dots) =_{1/2^n} x_n.$$
	
\begin{proposition}\label{prop5}
	Models of $\eT^\mathrm{comp}$ are precisely complete metric spaces, $M[\lim]: M^N \to M$ always carries a sequence to a limit of that sequence, and every nonexpansive map between models is a homomorphism.
	In particular, $\Mod(\eT^\mathrm{comp})$ is equivalent to the full subcategory of complete metric spaces, and the free model on $X$ is the completion.
\end{proposition}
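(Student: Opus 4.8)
The plan is to first unwind what the data of a model actually amounts to. An element of $M^N$ is a nonexpansive map $N \to M$, i.e.\ a sequence $(a_n)$ with $d(a_n,a_m) \le 1/2^{\min(n,m)}$; in particular $d(a_n,a_m) \le 1/2^n$ whenever $m \ge n$, so these are exactly the ``rapidly Cauchy'' sequences. The single axiom says precisely that $d(M[\lim]((a_n)), a_n) \le 1/2^n$ for every such sequence and every $n$. Since $1/2^n \to 0$, this forces $a_n \to M[\lim]((a_n))$, so $M[\lim]$ must send each sequence in $M^N$ to a limit of that sequence. Because limits in a metric space are unique, $M[\lim]$ is uniquely determined wherever it exists, which already shows that being a model is a \emph{property} of the underlying space rather than extra structure.

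Next I would prove the two inclusions behind ``models are precisely complete spaces''. For the forward direction (a model is complete), I would take an arbitrary Cauchy sequence in $M$ and pass to a subsequence $(b_k)$ with $d(b_k, b_{k+1}) \le 1/2^{k+1}$; summing the geometric tail gives $d(b_n, b_m) \le 1/2^n$ for $n \le m$, so $(b_k)$ is a genuine element of $M^N$. Applying the axiom to $M[\lim]((b_k))$ shows $(b_k)$ converges, and a Cauchy sequence possessing a convergent subsequence converges, so $M$ is complete. For the converse (a complete space is a model), I would define $M[\lim]((a_n)) = \lim_n a_n$, which exists because $M^N$ consists of Cauchy sequences. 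Nonexpansiveness in the sup-metric follows from continuity of $d$, via $d(\lim_n a_n, \lim_n b_n) = \lim_n d(a_n,b_n) \le \sup_n d(a_n,b_n)$, and the axiom holds with the sharp constant because $d(\lim_m a_m, a_n) = \lim_m d(a_m,a_n) \le 1/2^n$, using $d(a_m,a_n)\le 1/2^n$ for $m \ge n$.

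For the categorical statements I would observe that if $\phi: M \to M'$ is any nonexpansive map between models, then $\phi$ is continuous, so $\phi(\lim_n a_n) = \lim_n \phi(a_n)$; since $(\phi(a_n)) \in (M')^N$, this is exactly the homomorphism condition $\phi(M[\lim]((a_n))) = M'[\lim]((\phi(a_n)))$. Hence every nonexpansive map between models is automatically a homomorphism. Combined with uniqueness of the model structure, this says the forgetful functor $\Mod(\eT^{\mathrm{comp}}) \to \Met$ is fully faithful with essential image the complete metric spaces, i.e.\ an equivalence onto that full subcategory. Finally, $\Free^{\eT^{\mathrm{comp}}}(X)$ is the value at $X$ of the left adjoint to this forgetful functor; since the functor is the inclusion of complete spaces into $\Met$, its left adjoint is the completion, giving $\Free^{\eT^{\mathrm{comp}}}(X) = \overline{X}$ — equivalently, this follows by matching the universal property of $\Free^{\eT^{\mathrm{comp}}}$ against that of the metric completion.

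I expect the main obstacle to be the forward (completeness) direction. The operation $\lim$ literally acts only on the rapidly-Cauchy sequences in $M^N$, so bridging from this restricted data to genuine completeness (every Cauchy sequence converges) requires the subsequence-extraction argument together with the fact that a Cauchy sequence inheriting a convergent subsequence converges. Care is also needed to obtain the precise geometric bound $d(b_n,b_m)\le 1/2^n$ so that the chosen subsequence really lands in $M^N$, and to verify the axiom's sharp constant $1/2^n$ rather than a weaker one.
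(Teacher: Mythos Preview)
Your proposal is correct and follows essentially the same route as the paper's proof: both directions of the equivalence between models and complete spaces are handled as you describe, the uniqueness of limits pins down $M[\lim]$, and continuity of nonexpansive maps yields the homomorphism claim. The only cosmetic difference is in the subsequence extraction for the forward direction: the paper picks indices $n_k$ so that $d(a_{n_k}, a_m) \le 1/2^k$ for all $m > n_k$ directly from the Cauchy condition, whereas you control consecutive gaps $d(b_k,b_{k+1}) \le 1/2^{k+1}$ and sum the tail; both land the subsequence in $M^N$.
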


\begin{corollary}
	Let $\eT$ be any theory.
	Then $\Mod(\eT \sqcup \eT^\mathrm{comp})$ is equivalent to the full subcategory of $\Mod(\eT)$ spanned by the complete models.
\end{corollary}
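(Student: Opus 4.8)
The plan is to combine \cref{prop4}, which decomposes the models of a disjoint union of theories, with \cref{prop5}, which characterizes the models of $\eT^\mathrm{comp}$. By \cref{prop4} we have
\[
\Mod(\eT \sqcup \eT^\mathrm{comp}) \cong \Mod(\eT) \times_\Met \Mod(\eT^\mathrm{comp}),
\]
so an object of the left-hand side is precisely a metric space $X$ carrying \emph{independently} an $\eT$-model structure and an $\eT^\mathrm{comp}$-model structure, while a morphism is a single nonexpansive map between underlying spaces that is simultaneously a homomorphism for both theories. The whole argument then reduces to unwinding what each of these two pieces of structure contributes.

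First I would analyze the objects. By \cref{prop5}, equipping $X$ with an $\eT^\mathrm{comp}$-model structure is exactly the assertion that $X$ is complete: the operation $M[\lim]\colon M^N \to M$ is forced to send each nonexpansive map $N \to M$ (which encodes a Cauchy sequence, by the choice of metric on $N$) to its limit, and such a limit exists precisely when $X$ is complete. Hence the underlying spaces occurring in the fiber product are exactly the complete spaces, and the objects of $\Mod(\eT \sqcup \eT^\mathrm{comp})$ are exactly complete metric spaces equipped with an $\eT$-model structure, i.e.\ the complete models of $\eT$. Next I would analyze the morphisms, which is where fullness is secured: a morphism in the fiber product is a nonexpansive map between complete spaces that is both an $\eT$-homomorphism and an $\eT^\mathrm{comp}$-homomorphism, but \cref{prop5} asserts that \emph{every} nonexpansive map between $\eT^\mathrm{comp}$-models is automatically a homomorphism, so the second condition is vacuous. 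Thus a morphism is simply an $\eT$-homomorphism between complete models, and the equivalence with the full subcategory of $\Mod(\eT)$ on complete models follows, compatibly with the projections to $\Met$.

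The only genuine point requiring care is that the equivalence of \cref{prop5} between $\Mod(\eT^\mathrm{comp})$ and complete metric spaces adds no spurious data: since a limit in a metric space is unique, the $\lim$-operation is determined by the underlying space, so completeness is a \emph{property} rather than extra structure. This is exactly what allows the fiber-product decomposition to collapse to a full subcategory, rather than to some larger category with additional objects or morphisms. I do not expect any serious obstacle here; given \cref{prop4} and \cref{prop5}, no further computation is required, and the result is essentially a bookkeeping consequence of the two.
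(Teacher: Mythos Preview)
Your proposal is correct and is exactly the intended argument: the paper states this as an immediate corollary of \cref{prop4} and \cref{prop5} without giving an explicit proof, and your unwinding of the fiber product via those two propositions is precisely how the result follows.
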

Thus, for example, by taking the disjoint union of $\eT^\mathrm{comp}$ with the theory of $p$-interpolative barycentric algebras from \cite[section 10]{raduQAR2016}, we get a theory whose free model on a separable metric space $X$ is the space of Borel probability measures on $X$ in the $p$-Wasserstein metric.


\section{Quantitative equational theories as metric equational theories}\label{sec:Specials}

Given a signature where all the arities are discrete, every sequent
$\vdash t \ok$ is trivially provable by repeated application of $\mathbf{App}$ and $\mathbf{Var}$.
In this case, a metric equational theory over this signature is simply a quantitative equational theory,
the notion of model is the same, etc.

It is interesting to ask which monads $T: \Met \to \Met$ are axiomatizable by quantitative equational theories.
Using our equivalence between metric equational theories and Lawvere theories over $\Met$, we can answer this question.

\begin{theorem}\label{surj-disc}
	A countable-arity monad $T$ is axiomatizable by a quantitative equational theory $\eT$, if and only if $T$ preserves surjections, i.e. $T(f): TX \to TY$ is a surjection whenever $f: X \to Y$ is a surjection.    
\end{theorem}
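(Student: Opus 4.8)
The plan is to prove the two implications separately, handling the forward direction via the explicit term model and the reverse by reducing metric arities to discrete ones using the surjection hypothesis.

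For the \emph{only if} direction, suppose $\eT$ is a quantitative equational theory axiomatizing $T$, so all arities of $\eT$ are discrete and, as noted above, every term is automatically well-formed. By the explicit description of the free model, $TX = \Free^\eT(X)$ is the set of terms over generators $X$ modulo provable equality, and $T(f)$ relabels each generator $a$ by $f(a)$. Given a class $[s] \in TY$, the term $s$ uses only countably many generators $y_1, y_2, \dots \in Y$; choosing (by countable choice) preimages $x_i \in f^{-1}(y_i)$ and substituting yields a term $t$ over $X$ which is well-formed precisely because the arities are discrete, and $T(f)([t]) = [s]$. Hence $T(f)$ is surjective whenever $f$ is.

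For the \emph{if} direction, I would pass through the Lawvere theory. Let $\lT$ be the $\Met$-Lawvere theory corresponding to $T$, so that by \cref{prop:LawisMET} the MET $\eT(\lT)$ has $\Mod(\eT(\lT)) \cong \Mod(\lT) \cong T\text{-alg}$ over $\Met$, its $A$-ary operations being the elements of $\lT(x^A,x) \cong TA$. For each countable metric space $A$ write $A_{\mathrm{disc}}$ for the same underlying set with the discrete (everywhere-$\infty$) metric, and let $q_A : A_{\mathrm{disc}} \to A$ be the identity-on-points map, which is nonexpansive and surjective. Since $T$ preserves surjections, $T(q_A) : TA_{\mathrm{disc}} \to TA$ is surjective, i.e.\ \emph{every} $A$-ary operation $\omega$ equals $T(q_A)(\tilde\omega)$ for some discrete-arity operation $\tilde\omega \in TA_{\mathrm{disc}}$. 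I would then define a QET $\eT'$ whose operation symbols are exactly the discrete-arity operations of $\eT(\lT)$ and whose axioms are the equational judgments derivable in $\eT(\lT)$ mentioning only these symbols (all well-formed, since discrete terms are automatically so). Restriction yields a functor $\Mod(\eT(\lT)) \to \Mod(\eT')$ which is faithful, and full: since each $M[\omega]$ is the restriction of $M[\tilde\omega]$ along the inclusion $M^A \hookrightarrow M^{A_{\mathrm{disc}}}$, any nonexpansive map commuting with the discrete operations commutes with all of them.

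The heart of the argument, and the step I expect to be the main obstacle, is essential surjectivity together with well-definedness. Given an $\eT'$-model $M$, I recover every metric operation by $M[\omega] := M[\tilde\omega]\vert_{M^A}$ for a chosen lift $\tilde\omega$. Using axioms 3 and 4 of $\eT(\lT)$ one rewrites $T(q_A)(\tilde\omega)$, in the context $\Gamma(A)$, as the plain application $[\tilde\omega](x_1,x_2,\dots)$ of the discrete symbol to the variables; consequently, if two lifts satisfy $T(q_A)(\tilde\omega) = T(q_A)(\tilde\omega')$, this rewriting turns the relation into the judgment $\Gamma(A) \vdash [\tilde\omega](x_1,\dots) =_0 [\tilde\omega'](x_1,\dots)$, which mentions only discrete symbols and is therefore an axiom of $\eT'$, so $M$ validates it and $M[\omega]$ is independent of the lift. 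The same mechanism, with the restricted axioms 2--4, shows the recovered structure satisfies every axiom of $\eT(\lT)$. Thus the restriction functor is an isomorphism of categories over $\Met$, and combining with \cref{prop:LawisMET} and the equivalence between METs and countable-arity monads, the monad of $\eT'$ is $T$, exhibiting $T$ as axiomatized by a quantitative equational theory.
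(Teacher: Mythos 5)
Your proposal is correct and follows essentially the same route as the paper: the forward direction via surjectivity of the relabelling map on the explicit term model (using that discrete arities make every term well-formed), and the reverse direction by lifting each metric-arity operation of $\eT(\lT(T))$ along the surjection $T(A^d) \to T(A)$ to a discrete-arity operation, forming the discrete fragment as a QET, and checking the restriction functor on models is faithful, full, and essentially surjective. The only cosmetic difference is that you define the QET as the full set of derivable judgments in discrete symbols, whereas the paper generates it by the translated axioms $\Gamma \vdash \bar{s} =_\epsilon \bar{t}$ and then proves conservativity; these yield the same theory.
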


\begin{proof}
	Let $\eT$ be any quantitative equational theory, viewed as a metric equational theory.
	That is, it is an MET whose operations all have discrete arities.
	Clearly, for any metric space $A$, also $\eT(A)$ and hence $\eT \sqcup \eT(A)$ have this property.
	Then $\vdash_{\eT \sqcup \eT(A)} t \ok$ for \emph{any} term $t$.
	Let $f: A \to B$ be a surjection of metric spaces.
	The induced map $\Free^\eT(A) \to \Free^\eT(B)$ is given by replacing each constant symbol $[a]$
	in a term $t$ in the theory $\eT \sqcup \eT(A)$ with $[f(a)]$.
	Since $f$ is surjective, given a term $t$ in $\eT \sqcup \eT(B)$, we can always find $t'$ which would be mapped to it by this procedure. And since all terms are well-formed in $\eT \sqcup \eT(A)$, this $t'$
	represents an element of $\Free^\eT(A)$ which is therefore in the preimage of $t$.
	So the map $\Free^\eT(f)$ is surjective.
	
	Conversely, suppose $T$ is of countable arity and preserves surjections.
	Given a metric space $A$, let $A^d$ be the underlying set of $A$ equipped with the discrete metric $d(a,b) = \infty$. Note that there is a surjection (in fact a bijection) $A^d \to A$. 
	
	$T$ can be axiomatized by $\eT(\lT(T))$, but this contains a number of operation symbols of non-discrete arity.
	For each operation $f: x^A \to x$ where $A$ is not discrete, we can choose a factorization over $x^A \to x^{A^d}$,
	because the precomposition map $\lT(T)(x^{A^d}, x) \to \lT(x^A,x)$ is isomorphic to $T(A^d) \to T(A)$ and hence surjective. Choose such a factorization $\bar{f}: x^{A^d} \to x$ for each $f$.
	Note that $$\Gamma(A) \vdash_{\eT(\lT(T))} [f](x_1, \dots) = [\bar{f}](x_1, \dots).$$
	Let us abbreviate $\eT = \eT(\lT(T))$.
	Now consider a theory $\eT'$ defined as follows:
	\begin{enumerate}
		\item The operation symbols are the \emph{discrete} operations of $\eT(\lT(T))$.
		\item Whenever $\Gamma \vdash s =_\epsilon t \in \eT$, we let $\Gamma \vdash \bar{s} =_\epsilon \bar{t} \in \eT'$, where $\bar{s}, \bar{t}$ denote the result of replacing each occurrence of an operation symbol $[f]$ in $s$ and $t$ with $[\bar{f}]$.
		\item For any term, $\Gamma \vdash t \ok$ is in $\eT'$.
	\end{enumerate}
	This set of sequents is clearly stable under the inference rules (since $\eT$ is).
	The signature of $\eT'$ is a subset of the signature of $\eT$, so any term in the former is also a term in the latter.
	Note that if $\Gamma \vdash_{\eT'} s =_\epsilon t$, then there are some terms $s', t'$ in $\eT$ so that
	$\Gamma \vdash_{\eT} s' =_\epsilon t'$ and $\bar{s'} = s, \bar{t'}=t$.
	But by repeatedly using the equation $\Gamma(A) \vdash [f](x_1, \dots) = [\bar{f}](x_1, \dots)$, (where $A$ is the arity of $f$), we find that we must have $\Gamma \vdash s' = s$ and similarly for $t' = t$, and hence we must also have
	$\Gamma \vdash_\eT s =_\epsilon t$.
	
	Hence we get a natural forgetful functor $U: \Mod(\eT) \to \Mod(\eT')$ over $\Met$.
	It suffices to show this is an equivalence of categories.
	It is clearly faithful, because the composite $\Mod(\eT) \to \Mod(\eT') \to \Met$ is faithful.
	Given two models $M, N \in \Mod(\eT)$, and a homomorphism $\phi: UM \to UN$,
	note that
	\[\phi(M[f](x_1, \dots)) = \phi(M[\bar{f}](x_1, \dots)) = N[\bar{f}](\phi(x_1), \dots) = N[f](\phi(x_1), \dots),\]
	so $\phi$ is already a homomorphism $M \to N$. Hence, the forgetful functor is full.
	
	Finally, given a model $M'$ of $\eT'$, defining $M$ with the same underlying metric space and $M[f] = M'[\bar{f}]$ gives a model of $\eT$ with $UM = M'$.
	Hence, the functor is an equivalence of categories.   
\end{proof}

We can also ask which monads correspond to theories with only finitary operations.
Since $\Met$ is not locally finitely presentable, there is no clean correspondence between finitary monads and finitary Lawvere theories. And for similar reasons, many plausible characterizations of the monads presented by ``finitary METs'' fail. We will give a few counterexamples to demonstrate the problem.

\begin{example}
	Let $X_i$ be the set consisting of two points $a$ and $b$ at distance $1+ 1/i$,
	considered as a diagram indexed by the category $(\mathbb{N},\leq)$ (with identities as the structure morphisms).
	Then $\colim_i X_i$ consists of two points at distance $1$.
	
	\begin{enumerate}
		\item Consider the theory $\eT_1$ with one binary operation symbol $f: (\{x,y\}, d(x,y)=1)$. Then $\Free^{\eT_1}(X_i) \cong X_i$, but $\Free^{\eT_1}(\colim_i X_i)$ has three points, $a, b$ and $f(a,b)$.
		\item Consider the theory $\eT_2$ with no operations, and one equation $x=_1 y \vdash x =_0 y$.
		Then $\Free^{\eT_2}(X_i) \cong X_i$, but \[\Free^{\eT_2}(\colim_i X_i) = \{*\}.\]
	\end{enumerate}
	
	Thus, neither of these theories axiomatize monads which are finitary, in the sense that they commute with (finitely) filtered colimits.
	
	The problem in both cases is that quantitative equations ---whether as preconditions for the application of an operation, or preconditions for another equation--- are not ``finitary'', do not commute with finitely filtered colimits. An analogous problem prevents them from being ``strongly finitary'' in the sense studied in \cite{adamek2022quantitative}.
\end{example}

This problem seems to depend in an essential way on the \emph{discontinuity} of these conditions. The equation $x=_0y$ appears ``suddenly'' once $x=_1 y$.
Thus, for example, we can consider the ``theory of contractions'', having one unary symbol $s$ and the equations $x=_{2 \epsilon} y \vdash sx =_\epsilon sy$ for every $\epsilon$.
This is finitary in the sense that the associated monad (which is simply $X \mapsto X \times \mathbb{N}$ equipped with the metric $d((x,n),(y,n)) = d(x,y)2^{-n}, d((x,n),(y,m) = 0$ if $n \neq m$) commutes with filtered colimits.

In \cite{adamek2022quantitative}, recognizing essentially this problem, the notion of \emph{strongly finitary functor} $\Met \to \Met$ is studied. These are functors $F$ which equal the enriched left Kan extension of their restriction to the subcategory of finite and discrete metric spaces.
This intuitively corresponds to allowing only finite and discrete-arity operations, and allowing only axioms of the form $\vdash t =_\epsilon s$. When restricted to ultrametric spaces, this is indeed the case ---strongly finitary monads are exactly those presented by QETs of this form (\cite[Theorem 5.8]{adamek2022quantitative}).

However, this correspondence does \emph{not} hold, as the following example shows:

\begin{example}\label{counterexample-strongfinit}
	Let $\eT$ be the theory with one binary operation $f$, two unary operations $g,g'$, and the axiom $g =_1 g'$.
	Then $\Free^\eT: \Met \to \Met$ is not strongly finitary.
	
	To see this, recall that the left Kan extension under consideration is given by the coend formula
	\[\Lan_{i: \ncat{Fin} \to \Met}\Free^\eT(X) = \int^{F}X^F \otimes \Free^\eT(F)\]
	This means the question is whether the map
	\[\int^{F}X^F \otimes \Free^\eT(F) \to \Free^\eT(X),\]
	which carries a pair $(\alpha: F \to X, t \in \Free^\eT(F))$ to $\Free^\eT(\alpha)(t)$,
	is an isomorphism for all $X$.
	
	Let $X = \{x_1, x_2,x_3\}$ with $d(x_1, x_2) = 1, d(-,x_3) = \infty$.
	Consider the two terms $f([x_1],g([x_3]))$ and $f([x_2],g'([x_3]))$ in $\Free^\eT(X)$.
	Applying the axioms $\vdash [x_1] =_1 [x_2]$ and $\vdash g(y) =_1 g'(y)$, and nonexpansiveness, clearly these have distance at most $1$.
	
	Now consider the points $(\{a,b\}, \alpha, [f([a], g([b]))]), (\{a,b\}, \beta, [f([a],g'([b]))])$ in the coend,
	where $\alpha(a) = x_1, \alpha(b) = x_3, \alpha'(a) = x_2, \alpha'(a) = x_3$.
	Clearly $d_{X^{\{a,b\}}}(\alpha, \beta) = 1$, and also 
	\[d([f([a], g([b]))], [f([a],g'([b]))]) = 1.\]
	Hence, the distance of these points in the tensor product is $2$. (It is not too difficult to see that the distance in the coend, which is a quotient of the coproduct of all these tensor products, is not less than $2$).
\end{example}

\section{Conclusions and Related and Future Work}

Approximation is fundamental in mathematics and computer science and motivates the extension of universal algebra from the category $\Set$ to the category $\Met$. There are two current approaches - Quantitative Equational Theories and Enriched Lawvere Theories, but neither is a complete answer. QETs produce sound and complete systems for a notion of algebraic structure, but that notion is too weak to cover key examples such as Cauchy Completion. On the other hand, Enriched Lawvere Theories provide the right theoretical framework but are not accompanied by sound and complete proof systems needed to establish distances between terms in specific theories. This paper takes the best of each approach, producing what we call Metric Equational Theories. The fundamental innovation is the inclusion of metric arities for operators (motivated by the Enriched Lawvere Theory framework) within METs. 

There are a number of directions of future work, and we highlight some here. Firstly, Enriched Lawvere Theories don't give sound and complete proof systems for the equations of a theory. But in the case of metric spaces, we showed such systems exist. Can we find conditions under which such systems exist for a broad class of Enriched Lawvere Theories? Secondly, going beyond equational theories, we might ask what stronger systems look like. For example, equational theories correspond to finite product theories, but there is a natural notion of finite limit theory. The question is thus, how do we extend this paper to develop finite limit theories for metric spaces? Thirdly, and from a different perspective, algebraic theories underpin effectful programming languages. So how can we use the work contained in this paper to create effectful programming language constructs for approximate computation?

\textbf{Related work.}
In \cite{adamek2022quantitative} Adámek et.al study a subclass of QETs and prove a classification theorem quite analogous to ours in the context of \emph{ultrametric spaces}, showing that they present exactly the \emph{strongly finitary} monads on ultrametric spaces (\cite[Theorem 5.8]{adamek2022quantitative}). However, their classification does not extend to monads with operations of metric arity.
There is also recent work describing the category theory of $\Met$ from other points of view, e.g. \cite{AdamekHausdorff} describing \textit{Hausdorff polynomial functors} on $\Met$ and their associated monads, or \cite{Adamek2022} developing various species of \textit{approximate limits} in $\Met$.

\nocite{*}
\bibliographystyle{eptcs}
\bibliography{bibliography}


\section*{Appendix}

\begin{proof}[\textbf{Proof of Proposition \ref{prop1}}]
	Clearly $\et{T}$ is generated by a set of equations, so it suffices to check that it's well-founded.
	
	First, we prove that whenever $\Gamma \vdash f((t_i)) \ok \in \eT$ for an operation $f: (N,d)$, we have
	$\Gamma \vdash t_i =_{d(i,j)} t_j$ for each $i, j \in N$.
	To see this, let $\eT' \subseteq \eT$ be the subset with the same equational sequents, but only those well-formedness sequents for which this rule holds.
	Clearly, this contains $S$ (since it contains all equations in $\eT$), so it suffices to prove that it's stable under the inference rules.
	Since it has all the equations in $\eT$, we only have to check $\mathbf{Var}$ and $\mathbf{App}$.
	The former is true by definition, the latter because the preconditions are exactly the extra condition necessary for the postcondition to be in $\eT'$.
	
	Now we prove that $\eT$ is well-founded.
	Again, we define a subset $\eT' \subseteq \eT$. This time, we let it contain the same formational judgments, but only those equational judgments $\Gamma \vdash t =_\epsilon s$ where $\Gamma \vdash t \ok$ and $\Gamma \vdash s \ok$ are in $\eT$.
	Note that this contains all of $S$, so again it suffices to show that it's stable under the inference rules.
	\begin{enumerate}
		\item $\mathbf{Var}$ and $\mathbf{App}$ are now clear.
		\item $\mathbf{Assum}$ holds because $\eT$ satisfies $\mathbf{Var}$.
		\item $\mathbf{Refl}$ holds because the precondition is exactly what's required for the postcondition to be in $\eT'$.
		\item $\mathbf{Symm}$ holds because, given the precondition, we must further have $\Gamma \vdash t \ok \in \eT$ (and also $t'$). Hence, since $\eT$ satisfies $\mathbf{Symm}$, the postcondition must also be in $\eT'$.
		\item $\mathbf{Triang}, \mathbf{Max}, \mathbf{Cont}$ all hold for essentially the same reason as $\mathbf{Symm}$.
		\item To prove $\mathbf{NExp}$, we have to show that, given the preconditions, $\Gamma \vdash f((s_i)) \ok \in \eT$. But this follows from the preconditions, the further fact that because the preconditions are in $\eT'$ we have $\Gamma \vdash s_i \ok \in \eT$ for each $i$, and the fact that $\eT$ satisfies $\mathbf{App}$.
		\item Finally, the most difficult rule is $\mathbf{Subst}$.
		We will prove that, given the preconditions, $\Gamma \vdash s[u_i/x_i] \ok$.
		If $s$ is a variable not among the $x_i$, this is obvious.
		If $s = x_i$, then $s[u_i/x_i] = u_i$, and we are done by assumption.
		Suppose $s = g((s_k))$ for some symbol $g: (N', d')$.
		By induction, assume $\Gamma \vdash s_k[u_i/x_i] \ok \in \eT$ for each $k$.
		By assumption, $\{x_i =_{d(i,j)} x_j\} \vdash g((s_k)) \ok \in \eT$.
		But by the rule we proved above, this means that $\{x_i =_{d(i,j)} x_j\} \vdash s_k =_{d'(k,k')} s_{k'} \in \eT$ for each $k,k' \in N'$.
		Now $\mathbf{Subst}$ for $\eT$ implies that $\Gamma \vdash s_k[u_i/x_i] =_{d'(k,k')} s_{k'}[u_i/x_i]$.
		These claims and the inductive assumption, together with $\mathbf{App}$, imply the desired conclusion.
		\qedhere
	\end{enumerate}
\end{proof}

\begin{proof}[\textbf{Proof of Proposition \ref{prop3}}]
	To give a space $M$ the structure of a model of $\eT(A)$ is to choose for each $a \in A$ a point $M[a] \in M$,
	satisfying $d(M[a],M[a']) \leq d_A(a,a')$. This is equivalent to a map $A \to M$ which is nonexpansive.
	A homomorphism is a nonexpansive map $f: M \to N$ so that $f(M[a]) = N[a]$.
	This is just the definition of a map in the slice category.
\end{proof}

\begin{proof}[\textbf{Proof of Proposition \ref{prop4}}]
	This is essentially true by definition,
	since (by construction) the operations of $\eT$ and $\eT'$ don't overlap within $\eT \sqcup \eT'$,
	and it suffices to satisfy the axioms of each theory independently (since the disjoint union theory is just generated by these).
\end{proof}

\begin{proof}[\textbf{Proof of Lemma \ref{lemma1}}]
	First note that because of $\mathbf{Subst}$, the arbitrary choices don't affect the provability of $\Gamma^A \vdash \phi^A$.
	By using $\mathbf{Subst}$ and $\mathbf{Assum}$, it's easy to see the forwards direction:
	\[\Gamma^A \vdash_{\eT} \phi^A \Rightarrow \Gamma \vdash_{\eT \sqcup \eT(A)} \phi\]
	On the other hand, consider the set of sequents in the signature of $\eT \sqcup \eT(A)$ so that the left-hand side holds.
	Clearly, this set contains both $\eT$ and $\eT(A)$, so it suffices to show it's stable under the inference rules.
	
	\begin{enumerate}
		\item $\mathbf{Var}$ and $\mathbf{Assum}$ are immediate, since $x^A = x$.
		\item To see it's stable under $\mathbf{App}$, suppose first that $f$ is an operation in $\eT$. Note that $\eT$ satisfies $\mathbf{App}$, so given the assumption that $\Gamma^A \vdash_\eT t_i^A \ok$ and so on, we find that $\Gamma^A \vdash_\eT f(t_i^A) \ok$.
		But clearly $f(t_i^A) = f(t_i)^A$, so this is just what we wanted.
		On the other hand, if $f = [a]$ is a nullary symbol, this is automatically true.
		
		An analogous argument proves $\mathbf{Refl}, \mathbf{Symm}, \mathbf{Triang}, \mathbf{Cont}$, and $\mathbf{NExp}$.
		\item For $\mathbf{Subst}$, we must be a bit more careful.
		
		Let $x_i, \delta_{ij}, u_i, s, t, \Gamma$ be as in the assumptions of the inference rule, where the terms are terms over $\eT \sqcup \eT(A)$.
		We assume we've further chosen our $x_a$ among variables not occurring in these terms.
		We are assuming that $\Gamma^A \vdash u_i^A =_{\delta_{ij}} u_j^A$ and $\{x_i =_{\delta_{ij}} x_j \mid i,j\}^A \vdash s^A =_\epsilon t^A$.
		
		Now, $\{x_i =_{\delta_{ij}} x_j\}^A = \{x_i =_{\delta_{ij}} x_j\} \cup \{x_a =_{d(a,a')} x_{a'}\}$.
		Note that $\Gamma^A \vdash x_a =_{d(a,a')} x_{a'}$ for all $a,a'$.
		So by expanding to the set of variables containing both the $x_i$ and the $x_a$, and taking $u_a = x_a$, we can apply $\mathbf{Subst}$ for $\eT$ to prove that $\Gamma^A \vdash s^A[u_i^A/x_i] =_\epsilon t^A[u_i^A/x_i]$.
		Now we just observe that $s^A[u_i^A/x_i] = s[u_i/x_i]^A$ and we're done.
		\qedhere
	\end{enumerate}
\end{proof}

\begin{proof}[\textbf{Proof of Proposition \ref{prop5}}]
		Let $X$ be a complete metric space.
		Given a sequence $(x_1, \dots x_n) \in X^N$, note that $d(x_n,x_m) \leq 1/2^n$ if $m> n$, so this is a Cauchy sequence.
		By continuity of the metric, $d(x_n,\lim_i x_i) = \lim_i d(x_n, x_i)$, which is less than  or equal $1/2^n$ from a certain point, so $d(x_n, \lim_i x_i) \leq 1/2^n$.
		Thus defining $X[\lim](x_1, \dots) = \lim_i x_i$ makes $X$ a model.
		On the other hand, any number $l$ satisfying $d(l,x_n) \leq 1/2^n$ for all $n$ must clearly be the limit, so this is the only way to make $X$ a model. Since nonexpansive maps are continuous, any nonexpansive map $X \to Y$ between complete metric spaces is a homomorphism between their associated models.
		
		On the other hand, let $M$ be a model of $\eT^\mathrm{comp}$.
		Let $a_1, \dots a_n$ be a Cauchy sequence.
		Then we can find $n_1$ so that $d(a_{n_1}, a_m) \leq 1/2$ for all $m> n_1$, $n_2 > n_1$ so that $d(a_{n_2}, a_m) \leq 1/2^2$ for $m> n_2$, and so on, since it is a Cauchy sequence.
		Now the subsequence $a_{n_1}, a_{n_2}, \dots$ is an element of $M^N$.
		Clearly $M[\lim](a_{n_1}, \dots)$ is a limit of this subsequence, hence a limit of the original sequence (since it is Cauchy).
		Hence, any model of $\eT^\mathrm{comp}$ is complete.
		By the preceding, it $M[\lim]$ must be given by taking limits, and any nonexpansive map between models must be a homomorphism.
		This concludes the proof.    
	\end{proof}

\end{document}